\tikzstyle{uedge}=[draw=blue!50!red]
\tikzstyle{fedge}=[draw=blue]
\tikzstyle{iedge}=[draw=red]
\tikzstyle{redge}=[draw=green!50!black]
\tikzstyle{rnode}=[draw,inner sep=2pt,color=black]
\tikzstyle{tnode}=[circle,minimum width=3pt,fill,inner sep=0pt]
\tikzstyle{dotnode}=[circle,minimum width=2pt,fill,inner sep=0pt]
\tikzstyle{labn}=[font=\sffamily,circle,fill=white,inner sep=1pt,draw=black]
\tikzstyle{legn}=[font=\scriptsize]
\tikzstyle{reln}=[circle,fill=white,inner sep=.4pt,draw=black]
\tikzstyle{oreln}=[circle,fill=white,inner sep=.4pt,draw=black!50,solid]
\tikzstyle{oree}=[thick,draw=black!50,densely dashed]
\tikzstyle{ree}=[thick,draw=black]
\tikzstyle{calcn}=[rectangle%
\tikzstyle{expcalcn}=[rectangle%
\newcommand{\Q}{\ensuremath{\mathcal{Q}}\xspace}
\newcommand{\QBF}[2]{\ensuremath{{#1}\, \text{.} \,{#2}}\xspace}
\newtheorem{observation}{Observation}
\newcommand{\res}{\textsf{Res}\xspace}
\newcommand{\rest}{\textsf{Res}$_\textsf{T}$\xspace}
\newcommand{\unit}{\textsf{unit-Res}\xspace}
\newcommand{\sld}{\textsf{SLD-Res}\xspace}
\newcommand{\sldres}{\textsf{SLD-Res}\xspace}
\newcommand{\slin}{\textsf{s-linear Res}\xspace}
\newcommand{\tlin}{\textsf{t-linear Res}\xspace}
\newcommand{\slres}{\textsf{SL-res}\xspace}
\newcommand{\QRAT}{\textsf{QRAT}\xspace}
\newcommand{\QIOR}{\textsf{QIOR}\xspace}
\newcommand{\QIORplus}{\textsf{QIOR}$^+$\xspace}
\newcommand{\QRATplus}{\textsf{QRAT}$^+$\xspace}
\newcommand{\qrc}{\textsf{Q-Res}\xspace}
\newcommand{\qurc}{\textsf{QU-Res}\xspace}
\newcommand{\sldQ}{\textsf{SLD-Q-Res}\xspace}
\newcommand{\derivesunit}{{\scriptsize \sststile{\textsf{unit-Res}\xspace}{}}}
\newcommand{\derivessld}{{\scriptsize \sststile{\textsf{SLD-Res}\xspace}{}}}
\newcommand{\derivessldQ}{{\scriptsize \sststile{\textsf{SLD-Q-Res}\xspace}{}}}
\newcommand{\derivessldQmulti}{ \,\scalebox{0.8}[0.8]{$\sststile{\textsf{SLD-Q}_\textsf{multi}\textsf{-Res}\xspace}{}$}\,}
\title{Extending Prolog for Quantified Boolean Horn Formulas 
}
\author{
 Anish Mallick\inst{1}
\and {Anil Shukla}\inst{2}
}
\institute{%
  Pontifica Universidad Catolica de Chile, Chile.\\
  \email{anish.mallick@mat.uc.cl}
  \and
  Department of Computer Science and Engineering, IIT Ropar, India.\\
  \email{anilshukla@iitrpr.ac.in}
}
\begin{document}
\maketitle

\begin{abstract}
Prolog is a well known declarative programming language based on propositional Horn formulas. It is useful in various areas, including artificial intelligence, automated theorem proving, mathematical logic and so on. An active research area for many years is to extend Prolog to larger classes of logic. Some important extensions of it includes the constraint logic programming, and the object oriented logic programming. However, it cannot solve problems having arbitrary quantified Horn formulas. 

To be precise, the facts, rules and queries in Prolog are not allowed to have arbitrary quantified variables. The paper overcomes this major limitations of Prolog by extending it for the quantified Boolean Horn formulas. We achieved this by extending the SLD-resolution (\sld) proof system for quantified Boolean Horn formulas, followed by proposing an efficient model for implementation. The paper shows that the proposed implementation also supports the first-order predicate Horn logic with arbitrary quantified variables. 

The paper also introduces for the first time, a declarative programming for the quantified Boolean Horn formulas. 

\end{abstract}

\section{Introduction}
\label{sec:intro}
Prolog is a declarative programming language developed by Alain Colmerauer, Bob Pasero and Philippe Roussel in 1972~\cite{Roussel75,Roussel93}. It was first designed for solving problems related to natural language processing. Very soon, arithmetic aspects of logic was added to it, using the ideas from Robert Kowalski SL-resolution prover~\cite{Kowalski71}. Today, Prolog is useful not only in the areas of language processing and automated theorem proving, but also in the areas of artificial intelligence, relational database, and design automation to name a few.

Prolog is, in a way an interactive system, which supports a kind of man-machine conversation. That is, a user is allowed to specify some relevant knowledge and subsequently ask it valid questions regarding the same. 
For example, one may specify several facts about a family, say family members and their relationships, along with some specific rules, for example a rule about siblings. 
Then one may ask if two elements in the space are siblings?

In Prolog, facts and rules are defined first then query follows. 
Though all facts and rules have to be predicate horn formulas with quantifiers, any rules defined have to be univerally quantified. 
For example, 
\begin{align*}
\forall x,y~ female(x)\wedge parent(x,y)\rightarrow mother(x,y)
\end{align*}
is a valid rule but 
\begin{align*}
\forall x,y\exists z~ parent(z,x)\wedge parent(z,y)\rightarrow sibling(z,y)
\end{align*}
is not. \\
On the other hand for queries we are allowed logical Horn clauses with existential quantifiers only. For example,
\begin{align*}
\hspace*{6mm} \exists x. parent(x,bob)
\end{align*}
is a valid query but
\begin{align*}
\hspace*{6mm} \exists x \forall y. ancestor(x,y)
\end{align*}
is not.
The primary focus of this work is to remove these limitations (see, \cite[Chapter 14]{Rowe1988} for reference) and extend the scope of Prolog to handle larger class of problems.
We also propose an efficient algorithm to the problem
\begin{align*}
\hspace*{6mm} \text{Is } \mathcal{F}=\Pi.\phi \implies \Pi.(\phi \wedge C)?,
\end{align*}
where, $\mathcal{F}$ is a first-order predicate Horn formula, and $C$ a first-order predicate Horn clause. Note that the above problem is in general undecidable~\cite{Halpern91, Voigt19}. The efficiency lies in the fact that the algorithm will detect loop or infinite braching in linear time and stop.
It should be clear that we are not claiming that any SLD-Q-resolution (Definition~\ref{def:sldq-res}) will not halt if this algorithm outputs loop, 
just that it will determine possiblity of not halting following the current course of action.
So this should not be viewed as an attempt towards halting problem but a contigency. 
These are explained further in the following section.

\subsection{Our Contributions}
\label{subsec:contri}
In this section we summarize the contribution of this article.
\begin{enumerate}
\item {\bf Extending Prolog for quantified Boolean Horn formulas}. A Prolog algorithm can be expressed as logical part plus control part. Logical part is expressed as facts and rules. The facts and rules are represented as (propositional) Horn formulas (Section \ref{subsec:propHorn}). The control part is based on the \sld proof system (Definition \ref{def:sld-resolution}). In particular, once a query clause $C$ is presented by the user, Prolog control tries to find an \sld refutation of $\neg C$, and accordingly return the answer. 

The article extended the \sld proof systems for the quantified Boolean Horn formulas and defined the \sldQ proof system (Section \ref{sec:sldQ}). Quantifed Boolean Horn formulas are all QBFs in closed prenix form with CNFs in which every clause has at most one positive literal (Definition~\ref{def:QHorn}). The article shows that \sldQ is sound and complete for the quantified Boolean Horn formulas and also proposes an efficient implementation for the same. That is, given a quantified Boolean Horn formula $\mathcal{F}=\Pi.\phi$ as facts and rules, and a quantified Horn query clause $C$, the paper proposes an efficient algorithm (Algorithm~\ref{linear-ProloAlgo-for-QBFs}) to answer the following: 
\begin{align*}
\text{Is } \mathcal{F}=\Pi.\phi \implies \Pi.(\phi \wedge C)?
\end{align*}
Obviuosly, one may encode verification problems as a quantified Boolean Horn formulas~\cite{verification-QHorn-2013, verification-QHorn2-2013,MonniauxG16}, which in general is undecidable. Our proposed algorithm is efficient in the sense that it can detect a possible loop during evaluation efficiently.

As a result, the article allows us to specify facts and rules as a quantified Boolean Horn formulas with quantified query clause. Refer Section~\ref{subsec:app1} for an application.

The article shows how our efficient algorithm (Algorithm~\ref{linear-ProloAlgo-for-QBFs}) handles the first-order predicate Horn logic as well. To be precise, the article shows that given facts, rules and queries as first-order predicate Horn clauses, Algorithm~\ref{linear-ProloAlgo-for-QBFs} solves efficiently the above implication problem.

\item {\bf Overcomes the weaknesses of existing Prolog}. As already discussed, Prolog does not allow arbitrary quantified variables in facts, rules and queries. All variables appearing in facts and rules are universally quantified and all variables appearing in the query are existentially quantified. These limitations arise due the Prolog control (engine) part, which is based on \sld proof systems for Horn formulas. 

This article overcomes the limitations of the existing Prolog by allowing arbitrary quantified variables in facts, rules and queries. As a result several problems which were impossible to be solved via existing Prolog can now be attempted. We present few examples below after few comments. 

The encoding of the problem statement of the Example~\ref{example:nocycle-tree} and Example~\ref{example-bipartite} (below) uses only universally quantified variables and can be specified in the existing Prolog. However, the query of the corresponding problem cannot be specified in the existing Prolog as it requires universally quantified variables. 

The encoding of the problem statement and the query of the Example~\ref{example:simple-relations} (below) uses both the existentially and universally quantified variables, and so cannot be specified in the existing Prolog. However, the problem is easily handled by the extended Prolog (QBF-Prolog, Section~\ref{sec:qbf-prolog}) of this article.

\begin{example}\label{example:nocycle-tree}
{\bf Tree-bipartite-problem}: Consider the problem related to trees and bipartite graphs. Recall, a graph $G=(V,E)$ is said to be bipartite iff its vertex set $V$ can be partitioned into two parts $U$ and $W$ such that every edge in $E$ connects a vetex in $U$ to a vertex in $W$. Consider the following: undirected graphs without cycle are bipartite, and trees are acyclic. These facts can be encoded as rules in the existing Prolog. However, consider the following query: are trees bipartite? 

The query required universal variables, and hence cannot be handled by the existing Prolog. However, the extended Prolog of the paper is capable to handle the problem. For the detailed encoding and solution of the problem, refer Section~\ref{subsec:tree-bipartite}.
\end{example}

\begin{example}\label{example-bipartite}
{\bf Bipartite problem}: Consider a Prolog program which specifies the set of all bipartite graphs $G$.  One can easily encode bipartite graphs in first-order predicate Horn logic using only universally quantified variables. Hence, this can be encoded in the existing Prolog. But, consider the following query regarding a bipartite graph $G=(\{U,W\},E)$: 
pick a vertex $x$ belonging to $U$ and move to some vertex $y$ using an edge $\{x,y\}$ and from $y$ again move to a vertex $z$ using the edge $\{y,z\}$. Is the vertex $z \in U$? 

Obviously, the above query is valid and needed universal variables in the encoding. Hence, the existing Prolog does not supports such queries. However, the present paper extend Prolog to handle such queries as well. For the detailed encoding of the bipartite problem, refer Section~\ref{subsec:app2}.
\end{example}

\begin{example}\label{example:simple-relations}
{\bf Simple relations}: The problem uses the following predicates: 
\begin{align*}
\notag & P(h,k)~~\text{/* the predicate is 1 iff } k = 2h */\\
\notag & R(h,k)~~\text{/* the predicate is 1 iff } h<k */
\end{align*}
We may give several interpretations to these predicates. For example, if $k$ and $h$ represent graphs, then the predicate $R(k,h)$ is $1$ iff the graph $h$ is a subgraph of the graph $k$. The predicate $P(k,h)$ is $1$ iff the graph $k$ is a superset of the graph $h$. That is, the graph $k$ is constructed from $h$ by say adding a vertex.

Let us now consider the following rules:
\begin{align*}
\notag & \forall h, k. P(h,k) \rightarrow R(h,k)~~\text{/* rule 1 */}\\
\notag & \forall h_1,h_2,h_3. [R(h_1,h_2) \wedge R(h_2,h_3)] \rightarrow R(h_1,h_3)~~\text{ /* rule 2 */}
\end{align*}
Now consider the following query:
\begin{align*}
\notag & \forall h \exists g,k. [ P(h,g) \wedge P(g,k)] \rightarrow R(h,k)~~\text{ /* query */}
\end{align*}
Clearly the query is correct according to our interpretation. We show that this problem can be easily solved by the QBF-Prolog of this article. 
For the detailed encoding and solution of the problem, refer Section~\ref{subsec:simple-relations}.
\end{example}

Finally, we show an example (Example~\ref{example-tree}) which cannot be handled by the existing Prolog, but also cannot be solved via the extended Prolog. The proofs of such problems are induction based, however, our algorithm is unable to mimic the inductive proof and just return a loop as the output. This is not surprising, even the pigeonhole principle has a short inductive proof, but is hard for resolution~\cite{haken_1985}. Consider the example below.

\begin{example}\label{example-tree}
Consider a Prolog program which specifies the set of all trees, with a unique root and such that every child has a unique parent. Consider the following query: is the tree connected? Existing prolog unables to handle such programs. QBF-Prolog, proposed in this article just return a $\textit{loop}$ for this problem. For the detailed encoding of the tree and query, refer Section~\ref{sec:tree-encoding}.
\end{example}

\item {\bf An interactive QBF-solver for the quantified Boolean Horn formulas (QBF-Prolog, Section~\ref{sec:qbf-prolog})}. There exists several QBF-solver in the literature (Section~\ref{subsec:qbf-solver}), then why QBF-Prolog? 

First of all, QBF-Prolog, deals with the quantified Horn formulas. 
It is well known that several problems of practical importance, such as program verifications~\cite{verification-QHorn-2013, verification-QHorn2-2013,MonniauxG16}, can be encoded as quantified Horn formulas. It is important to design an efficient QBF solver for the same. 
QBF-Prolog is such a solver. It uses the structure of the quantified Boolean Horn formulas and solves the satisfiability problem in linear time. (Of course, the QBF-Prolog may return a loop as well, but it detects the same in linear time).

QBF-Prolog can also be used for the quantified renamable Horn formulas.
We say that a QBF $\mathcal{F}$ is a quantified renamable Horn formula, if $\mathcal{F}$ can be transformed into a quantified Horn formula by negating every instance of one or more of its variables.
For example, the QBF $\exists x_1 \forall x_2 \exists x_3. (x_1 \vee x_2 \vee x_3) \wedge (\neg x_1 \vee x_2 \vee \neg x_3)$ is a quantified renamable Horn formula, since by complementing $x_1$ and $x_2$ we get the following quantified Boolean Horn formula: $\exists x_1 \forall x_2 \exists x_3. (\neg x_1 \vee \neg x_2 \vee x_3) \wedge ( x_1 \vee \neg x_2 \vee \neg x_3)$. This operation of choosing a subset of variables and replacing each positive literal of such a variable by the corresponding negative literal and vice versa, is called {\bf renaming}. It is easy to observe that renaming preserves satisfiability. That is, a QBF $\mathcal{F}$ is satisfiable if and only if the QBF $\mathcal{F}'$ obtained via renaming is satisfiable.

Given a QBF formula $\mathcal{F}$ it is possible to determine in linear time whether $\mathcal{F}$ is quantified renamable Horn formula~\cite{Hebrard94}. The algorithm also gives the set of variables which needs to be complemented in order to get a quantified Horn formula. Thus, using the algorithm from~\cite{Hebrard94}, and via the QBF-Prolog, one can solve the satisfiable problem for a quantified renamable Horn formula efficiently. Note that, \cite{Hebrard94} solves the problem for the propositional case, however, the same algorithm works for the quantified formulas as well, we just need to ignore the quantifiers.

Another advantage of QBF-Prolog is it's interactive nature: As it is based on Prolog, it supports man-machine conversation. To the best of our knowledge, all existing QBF solvers are imperative in nature. Also, QBF satisfiability problem is well suited to be solve via backtracking and hence is very natural to pick Prolog and extend it for the QBFs. 

\end{enumerate}

\subsection{Organisation of the paper}
\label{subsec:organization}
The remainder of the paper is organized as follows. We review the basic notations and preliminaries in Section~\ref{sec:notations}. 
In Section~\ref{sec:sldQ}, we extend the \sld proof systems for the quantified Boolean Horn formulas and proof the completeness and soundness of the new proof systems for the same. We extend Prolog for the quantified Boolean formulas, with the restrictions that the query clause contains no new variables, in Section~\ref{sec:qbf-prolog}, and also propose an efficient implementation model for the same. We extend Prolog for quantified Boolean Horn formulas without any restriction and for first-order predicate Horn formulas in Section~\ref{sec:qbf-prolog-general}. In Section~\ref{sec:application-of-qbf-prolog}, we present some applications of the QBF-Prolog. Finally, we present conclusions in Section~\ref{sec:conclusion}. 
 
\section{Notations and Preliminaries}
\label{sec:notations}
Quantified Boolean formulas (QBFs) are extensions of propostional formulas, in which each variable is quantified by universal or existential quantifiers. 
QBF does not increase the expressive power of propositional logic, it simply offers an exponentially more succinct encoding of problems. As a result, problems from various important industrial fields, such as formal verification and model checking can be encoded succintly as QBFs. This leads to the desire of building efficient QBF-SAT solvers. Let us define now QBFs more formally.

\noindent
{\bf Quantified Boolean Formulas (QBFs)}: It extend propositional logic with
Boolean quantifiers with the standard semantics that $\forall x. F$ is
satisfied by the same truth assignments as $F|_{x = 0} \wedge F|_{x =
	1}$ and $\exists x. F$ as $F|_{x = 0} \vee F|_{x = 1}$. 

We say that a QBF is in \emph{closed prenex form} with a CNF (Conjunctive Normal Form) matrix, if the QBF instance is of the form $\Pi.\phi$, where $\Pi = \mathcal{Q}_1 X_1 \mathcal{Q}_2 X_2 \dots \mathcal{Q}_n X_n$ is called the quantifier prefix, with $\mathcal{Q}_i \in \{\exists, \forall\}$, and $\phi$ is a quantifier-free 
CNF formula in the variables $X_1 \cup \ldots \cup X_n$. We have $X_i \cap X_j = \emptyset$, $\mathcal{Q}_i \neq \mathcal{Q}_{i+1}$, and $\mathcal{Q}_1, \mathcal{Q}_n = \exists$. 


A literal is either a variable or its complement. For a clause $C$, $var(C)$ is a set containing all the variables of $C$. For a QBF $\mathcal{F}$, $var(\mathcal{F})$ is a set containing all the variables of $\mathcal{F}$ (i.e., $var(\mathcal{F}) = \cup_{C \in \mathcal{F}} \{var(C)\}$). A quantifier $Q(\Pi,\ell)$ of a literal $\ell$ is $\mathcal{Q}_i$ if the variable $var(\ell)$ of $\ell$ is in $X_i$. A literal $\ell$ is existential if $Q(\Pi,\ell) = \exists$ and universal if $Q(\Pi,\ell) = \forall$. For literals $\ell$ and $k$, with $Q(\Pi,\ell) = \mathcal{Q}_i$ and $Q(\Pi,k) = \mathcal{Q}_j$, we say that the literal $\ell$ is on the left of literal $k$ ($\ell \leq_{\Pi} k$) if and only if $i \leq j$. 
Within a clause, we order literals according to their ordering in the quantifier prefix. We assume that QBF is in this form, unless noted otherwise. The order $\leq_{\Pi}$ is arbitrary extended to variables withing each block $X_i$. For a literal $\ell$, $level(\Pi,\ell) = i$ if $var(\ell) \in X_i$.



A QBF \QBF{\Q_1 x_1 \cdots \Q_k x_k}{\phi} can be seen as a game between two players: {\em universal} $(\forall)$ and {\em existential} $(\exists)$.
In the $i^{th}$ step of the game, the player ${\cal Q}_i$ assigns a value to the variable $x_i$. 
The existential player wins if $\phi$ evaluates to $1$ under the assignment constructed in the game. 
The universal player wins if $\phi$ evaluates to $0$.

An assignement tree~\cite{QRATOriginal17} $T$ of a QBF $\mathcal{F}$ is a complete binary tree of depth $|var(\mathcal{F})| + 1$. Internal nodes corresponds to the variables of $\mathcal{F}$. The order of the variables in $T$ respect the orders in the prefix in $\mathcal{F}$. Internal nodes which corresponds to an existential variables act as an OR-nodes, whereas universal nodes corresponds to an AND-nodes. Every internal nodes $x$ has two children, one for literal $x$ (true) and one for literal $\neg x$ (false). An existential node is labelled with $\top$ (true) if at least one of its children is labelled with $\top$ (true). A universal node is labelled with $\top$ if both of its children are labelled with $\top$. 
A path in $T$ is a sequence of literals. A path $\tau$ from root to a leave in $T$ is a complete assignement and the leave is labelled with the value of QBF under $\tau$. The QBF $\mathcal{F}$ is true if the root of $T$ is labelled with $\top$, and $\mathcal{F}$ is false if the root is labelled with $\perp$.

A subtree $T'$ of the asignment tree $T$ is a pre-model~\cite{QRATOriginal17} of the QBF $\mathcal{F}$ if the root of $T$ is also the root of $T'$, both children of every universal node must be present in $T'$, and exactly one child of every existential node are in $T'$. We say that a pre-model $T'$ of a QBF $\mathcal{F}$ is a model of $\mathcal{F}$, denoted $T' \models \mathcal{F}$ if every node in $T'$ is labelled with $\top$. Similarly, a false QBF has countermodels which can be defined dually.

A QBF $\mathcal{F}$ is satisfiable if it has at least one model. For QBFs $\mathcal{F}$ and $\mathcal{H}$, we say that $\mathcal{H}$ is implied by $\mathcal{F}$, denoted $\mathcal{F} \models \mathcal{H}$ if, for all $T$, if $T \models \mathcal{F}$ then $T \models \mathcal{H}$.

We say that two QBFs $\mathcal{F}$ and $\mathcal{H}$ are (logically) equivalent denoted $(\mathcal{F} \equiv \mathcal{H})$ iff  $\mathcal{F} \models \mathcal{H}$ and $\mathcal{H} \models \mathcal{F}$, and satisifability equivalent, denoted $\mathcal{F} \equiv_{sat} \mathcal{H}$, iff $\mathcal{F}$ is satisifable whenever $\mathcal{H}$ is satisfiable.

\subsection{QBF proof complexity}
Since past few decades, QBFs solving is an active area of research. As a result, several proof systems for QBFs have been developed. For example, Kleine B{\"u}ning~et~al. in~\cite{BuningKF95} introduced {\bf Q-resolution} (\qrc), which is an extension of the propositional resolution proof system for QBFs. Before defining \qrc, let us quickly define the Resolution (\res) proof system.

{\bf Resolution}: ({\res}) is well studied propositional proof system introduced by Blake in \cite{Bla37} and proposed by Robinson in \cite{Rob65} as automated theorem proving. 
The lines in the resolution proofs are clauses.
Given a CNF formula $F$, {\res} can infer new clauses according to the following inference (resolution) rule : 
$$\frac{C \vee x \hspace{5mm} D \vee \neg x}{C \vee D},$$
where $C$ and $D$ are clauses and $x$ is a variable being resolved, called as pivot variable. The clause $C \vee D$ is called the resolvent. Let $F$ be an unsatisfiable CNF formula. A resolution proof (refutation) $\pi$ of $F$ is a sequence of clauses $D_1, \dots, D_l$ with $D_l = \Box$ and each clause in the sequence is either from $F$ or is derived from previous clauses in the sequence using the above resolution rule.

We can also view $\pi$ as a directed acyclic graph $G_{\pi}$, where the source nodes are the clauses from $F$, internal nodes are the derived clauses and the empty node is the unique sink. Edges in $G_{\pi}$ are from the hypotheses  to the conclusion for each resolution step. 
In $G_{\pi}$, we say that a clause $C$ is descendant to a clause $D$ if there is a directed path from $C$ to $D$.

{\bf Q-resolution}: \qrc uses the resolution rule, but with a side condition that the pivot variable must be existential, and the resolvent does not have simultaneously a $z$ and $\neg z$. In addition, \qrc has a universal reduction rule ($\forall$-Red) which allows dropping a universal literal $u$ from a clause provided the clause has no existential literal $\ell$ with $u \leq_{\Pi} \ell$. That is, there exists no existential literal to the right of the reduced literal $u$. We say in this case, that $u$ is not {\bf blocked}.  

{\bf QU-resolution}~\cite{Gelder12}, removes the restriction from {\qrc} that the resolved variable must be existential variable and allows resolution on universal variables as well. Thus QU-resolution ({\qurc}) is classical resolution augmented with a $\forall$-Red rule. Interested readers are referred to~\cite{Shukla17}, for more details on QBF proof complexity.

\subsection{QBFs solvers}\label{subsec:qbf-solver}
Since, the propositional satisfiablity (SAT) problem is NP-complete, and the QBF-SAT problem is PSPACE-complete~\cite{AroraBarak09}, only algorithms with exponential worst-case time complexity are known for these problems. Despite of this, several efficient SAT solvers, capable of solving the satisfiability instances with thousands of variables and clauses, have been designed. For example, the GRASP search algorithm for the satisfiability problem~\cite{Marques-SilvaS99}, implemented by Marques-Silva and Sakallah. GRASP is based on conflict-driven clause learning (CDCL) algorithms~\cite{Ashish2005}.

Motivated by the success of propositional SAT solvers, many QBF solvers have also been developed. We would like to mention some important ones:

{\bf DepQBF}: DepQBF, developed by Lonsing and Egly~\cite{LonsingE17}, is based on QCDCL~\cite{ZhangM02}, which is an extension of CDCL for QBFs. DepQBF, implements a variant of QCDCL, which is based on a generalization of \qrc. This generalization is due to some additional axioms. Q-Res proofs can be extracted from instances of the QCDCL algorithm. This generalization, in DepQBF implementation, helps to produce exponential shorter refutation, than the traditional QCDCL-based solvers. 

{\bf RAReQS}: Recursive Abstraction Refinement QBF Solver~\cite{JanotaKMC12}, is a recursive algorithm for QBF solving, based on the Counter-example Guided Abstraction Refinement (CEGAR) technique~\cite{ClarkeGJLV03}. Initially, CEGAR was used to solve QBFs with two level of quantifiers~\cite{JanotaS11}. RAReQS, extends the technique to QBFs with an arbitrary number of quantifiers using recursion. 

CEGAR technique was designed to tackle problems whose implicit representation is infeasible to solve, and an abstract instance is tackled instead. Thus, in CEGAR-based algorithm, we have a concrete (an original) representation, and an abstract representation of a problem. Both the representations are connected as follows: if the abstract problem has no solution, then concrete problem also does not have a solution. If a solution to the abstract problem is not a solution to the concrete problem, then a counterexample can be constructed for this fact. And finally, if there are no counterexamples for a solution to the abstract problem, then the solution, is also a solution to the concrete problem. 

Based on the above properties one can compute a solution to the original problem as follows: compute a solution to the abstract problem, if no such solution exists, return: no solution to the original problem. Otherwise, check if it is also a solution to the original problem: if no counterexample exists, return: the same solution is a solution to the original problem. Otherwise, use the counterexample to refine the abstract representation and repeat. This gives the name: CounterExample Guided Abstract Refinement technique.

{\bf GhostQ}: GhostQ solver~\cite{KlieberSGC10} is originally implemented to support the QBF solving for instances which are not in prenex from. GhostQ is a DPLL-based QBF solver which uses ghost variables. Ghost literal, first introduced in~\cite{GoultiaevaIB09}, is a powerful propogation technique for QBFs. Interested readers are referred to~\cite{KlieberSGC10}.

\subsection{First-order predicate logic}\label{subsec:first-order}

First-order predicate logic extends the propositiona logic by adding both the predicates and the quantification.
A Predicate $P(x_1,x_2,\dots,x_n)$ is an $n$-ary function whose range is only $0$ (false) or $1$ (true). Unlike propositiona logic, where variable can take only $0$-$1$ values, in predicate logic variables can take any values from some universal set $U$. Predicates with arity $0$ are just the propositional variables. Some examples of predicates are as follows. Assume the set of all humans as the universal domain set $U$ for the below examples.  \\
\begin{example}
``$x$ is a male $\equiv$ male($x$)": here variable $x$ ranges over the universal set $U$. The predicate male($x$) is true iff $x$ is male.
\end{example}
\begin{example}
 ``$x$ is the mother of $y$ $\equiv$ mother($x$, $y$)": again $x,y$ ranges over the universal set $U$, and the predicate mother($x, y$) is true iff the variables $x$ and $y$ are initialized with some values $a, b \in U$ respectively, such that $a$ is the mother of $b$.
\end{example}

Clearly, as compared to the propositional logic, in the first-order predicate logic, it is possible to express knowledge much easier, for example the statement ``for every $x$, if $x$ is an indian then $x$ is a human" can be expressed as: 
$$\forall x~~ indian(x) \rightarrow human(x).$$
As another example, consider the following statement: If $x$ is a female, and parents of $x$ and $y$ are same, then $x$ is a sister of $y$. This statement can be expressed as: $$\forall x, y, m, f~~Female(x) \wedge Parent(x,m,f) \wedge Parent(y, m, f) \rightarrow Sister(x,y).$$ 

Obseve that the above example is equivalent to the following: 
$$\forall x, y, m, f~\Big(\neg \text{Female}(x) \vee \neg \text{Parent}(x,m,f) \vee \neg \text{Parent}(y, m, f) \vee \text{Sister}(x,y)\Big).$$ 

This rule has just one positive predicate, and hence is a Horn clause. A {\bf first-order predicate Horn logic} consists of Horn clauses only. However, each literal on a clause can be replaced by an arbitrary predicates of any arity. Also, the variables can be quantified arbitrary.

We skip the definition of first-order predicate logic more formally here. Interested readers are referred to the book by Robert S. Wolf~\cite{robert2005}.\\

\noindent
{\bf Substitution and Unification}: Given a first-order predicate formula $F(x_1,\dots,x_n)$, over distinct variables, $x_1,\dots,x_n,$ a substitution is a process of binding each variable $x_i$ with a predicate $P_i$ (denoted, $x_i/P_i$). The arity of $P_i$ is allowed to be zero. 

Recall, in \res, the resolution rules are performed over clauses with complementary literals $x$ and $\neg x$. 
Consider two literals $P(a)$ and $\neg P(x)$ appearing in two distinct clauses in a first-order predicate logic. The literals are almost complementary: the first contains a constant, whereas the second contains a variables. The subsitution can be applied here two make the literals complementary: substitute $x/a$ in the second. Thus, the literal $P(a)$ and $\neg P(x)$ becomes complementary after the substitution. One may now perform resolution step on the same. Thus substitution can be applied to make two first-order predicate formulas syntactically equal. 

The {\bf unification algorithm} is a general method for comparing first-order predicate formulas. The algorithms also computes the substitution which is needed to make the formulas syntatically equal. Interested readers are referred to~\cite{unification91}.

\subsection{Prolog}
	Prolog stands for PROgramming in LOGic. Prolog is a declarative programming language. Unlike, the procedural  programming languages, where we have to tell the computer what to do when, and how to achieve a certain goal, in Prolog on the other hand, we only need to tell what is true, and then quering it to draw conclusions. To be precise, in Prolog the description of a problem and the procedure for solving it are separated from each other. According to Robert Kowalski, a Prolog algorithm can be expressed as:
$$\text{algorithm = logic }+ \text{ control}$$ 
In the above equation, the logic part gives the description of the problem, that is, what the algorithm should do, and the control part indicates how it should be done. In particular, the program logic is expressed as facts and rules, and a computation is initiated by running a query over these facts and rules. 

The facts and rules are represented as {\bf Horn formulas}. Horn formulas are a special case of CNF formulas, where each clause is allowed to have at most one positive literals. 

Once a query as a goal clause is provided, Prolog tries to find a resolution refutation of the negated query. If the refutation is found, the query is said to be sucessful. This part constitute the control part. 

To be precise, the Prolog control part is based on SLD-resolution, which is an important refinement of Linear Resolution~\cite{ColmerauerR93}. A linear resolution derivation of a clause $C$ from a CNF formula $F$ is a sequence of clauses $C_1, C_2, \dots, C_m$, such that $C_1 \in F$, $C_m = C$, and for every $i<m$, $C_{i+1}$ is the resolvent of $C_i$ either with a clause $D \in F$ or with a clause $C_k$  for $k<i$. In order to define the basic principle used in processing a logic program in Prolog, we need to first define the SLD-resolution (Definition~\ref{def:sld-resolution}).  

\subsection{Proof system for Prolog: \sldres}\label{subsec:proofsystem}
 In order to cut down the search space for SAT solvers, several refinements of resolution proof system have been introduced. One of the most important refinement is the linear resolution. Linear resolution is known to be complete and sound for unsatisfiable CNF formulas \cite{Johannsen05}. To even cut down the search space, several refinements of linear resolution have been proposed, for example, \slin~\cite{Loveland70, Zamov71}, \tlin~\cite{Kowalski71}, \slres~\cite{Kowalski71}, and \sldres~\cite{Kowalski74}. All the above mentioned refinements are known to be complete and sound for unsatisfiable CNF formulas, except the \sldres. As Prolog is based on \sldres we define it next.
	
\subsubsection{\sldres}\label{subsec:sld}
\sldres, a refinement of linear resolution, was introduced by Robert Kowalski~\cite{Kowalski74}. The name SLD-Resolution stands for SL resolution with definite clauses. \sldres is complete and sound for propositional Horn formulas. Before introducing \sldres, let us quickly re-visit propositional Horn formulas in detail.

\subsubsection{Propositional Horn Formulas}\label{subsec:propHorn}
Horn formulas are a special case of CNF formulas, named after the logician Alfred Horn. A clause $C$ is a Horn clause if and only if $C$ contains at most $1$ positive literal. A Horn formula is a conjunction of Horn clauses. A definite Horn clause is a clause with exactly one positive literal and zero or more negative literals. A conjunction of definite Horn clauses is called a definite Horn formula. We usually denote a definite Horn clause $(x \vee \neg x_1 \vee \dots \vee \neg x_k)$ as $x \leftarrow x_1, \dots, x_k$, which is equivalent to the following:
$$\text{IF } x_1 \wedge x_2 \wedge \dots \wedge x_k \text{ then } x$$
Here, $x, x_1, \dots, x_k$ are variables. The variable $x$ is called the head of the clause, and $x_1,\dots,x_k$ is called the body of the clause. We denote the head of a Horn clause $C$ by $C^+$ (or $head(C)$) and the body by $C^-$. A clause with only negative literals are referred as negative clause. Clearly a negative clause is also a Horn clause. We usually refer a negative Horn clause as a {\bf goal} clause. We denote a goal clause $(\neg x_1 \vee \neg x_2 \vee \dots \vee \neg x_k)$ as $ \leftarrow x_1, x_2, \dots, x_k$. The empty clause $\Box$ is also considered as a goal clause. We quickly list some important facts about Horn formulas.

\begin{lemma}\cite{Bunning1999}\label{lem:Horn-properties}
For any Horn formula (other than the empty clause), the following holds:
\begin{enumerate}
	\item A definite Horn formula is satisfiable.

	\item A Horn formula $F$ is satisfiable if $F$ contains no positive unit clause (clauses with only one literal are called unit clauses). 
\end{enumerate}
\end{lemma}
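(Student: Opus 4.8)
The plan is to prove both parts by exhibiting a single explicit truth assignment that works, exploiting the one-positive-literal restriction of Horn clauses. For part (1) the natural candidate is the \emph{all-true} assignment; for part (2) it is the dual \emph{all-false} assignment. No induction, case splitting on the proof structure, or appeal to resolution is needed — the claims are essentially witnessed directly.

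For part (1): let $\alpha$ be the assignment setting every variable of the definite Horn formula $F$ to $1$. By definition a definite Horn clause $C = (x \vee \neg x_1 \vee \dots \vee \neg x_k)$ has exactly one positive literal $x$. Under $\alpha$ we have $\alpha(x) = 1$, so $C$ evaluates to $1$. Since $C$ was arbitrary, $\alpha$ satisfies every clause of $F$, hence $\alpha \models F$, so $F$ is satisfiable.

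For part (2): let $\beta$ be the assignment setting every variable of $F$ to $0$, and take any clause $C$ of $F$. Because $F$ is a Horn formula, $C$ contains at most one positive literal; because we are assuming $F$ is not the empty clause, $C$ is nonempty; and because $F$ contains no positive unit clause, $C$ cannot consist of a single positive literal. Hence $C$ must contain at least one negative literal $\neg x_i$, and $\beta(\neg x_i) = 1$, so $C$ evaluates to $1$ under $\beta$. As $C$ was arbitrary, $\beta \models F$, so $F$ is satisfiable. The only point that requires a moment's care — and it is not really an obstacle — is this short case analysis on the shape of $C$ in part (2): one has to combine the Horn property with the two hypotheses (``no empty clause'', ``no positive unit clause'') to conclude that every clause has a negative literal, which is exactly what makes the all-false assignment succeed.
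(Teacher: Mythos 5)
Your proof is correct and matches the paper's argument exactly: the all-true assignment for part (1) and the all-false assignment for part (2), with the same observation that excluding empty and positive unit clauses forces every clause to contain a negative literal. Your write-up is simply a more carefully spelled-out version of the paper's two-line proof.
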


\begin{proof}
Every clause in a definite Horn formula has exactly one positive literal. Thus the assignment which assigns each variable the value true, satisifies the formula. 

If the Horn formula has no positive unit clause, then each clause contains at least one negative literals. Again the assignment which assigns each variable the value false, satisfies the formula.
\end{proof}
Therefore, for a Horn formula to be unsatisfiable, it must contain at least one negative clause. In fact the following also holds:

\begin{lemma}\cite{Bunning1999}
Let $F = F_1 \cup F_2$ be an unsatisfiable Horn formula, where $F_1$ contains only definite Horn clauses and $F_2$ contains only negative clauses. Then 
$$F \text{ is unsatisfiable } \iff \exists C \in F_2: F_1 \cup \{C\} \text{ is unsatisfiable}$$

\end{lemma}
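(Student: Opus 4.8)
The plan is to prove the two implications separately; the right-to-left direction is immediate, while the left-to-right direction carries the real content.

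\medskip
\noindent\emph{The easy direction.} If some $C \in F_2$ is such that $F_1 \cup \{C\}$ is unsatisfiable, then since $F_1 \cup \{C\} \subseteq F$, every assignment falsifies some clause of $F_1 \cup \{C\}$ and hence some clause of $F$; thus $F$ is unsatisfiable. This is just monotonicity of unsatisfiability under the addition of clauses, so no work is needed here.

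\medskip
\noindent\emph{The main direction.} Suppose $F = F_1 \cup F_2$ is unsatisfiable. The key tool is that a definite Horn formula has a unique \emph{minimal model}, which is contained coordinate-wise in every model. First I would recall (or prove in one line) that the set of satisfying assignments of any Horn formula is closed under coordinate-wise intersection: if $M_1$ and $M_2$ both satisfy a Horn clause $D$, then the assignment $M_1 \cap M_2$ (setting a variable true iff both $M_1$ and $M_2$ do) satisfies $D$ as well — one checks this by cases on whether $D$ contains its single positive literal $z$, using that $z$ false in $M_1 \cap M_2$ forces $z$ false in $M_1$ or in $M_2$. By Lemma~\ref{lem:Horn-properties}, $F_1$ is satisfiable (the all-true assignment works), so the intersection $M$ of all models of $F_1$ is again a model of $F_1$, and by construction $M$ is contained in every model of $F_1$.

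\medskip
Now, since $F$ is unsatisfiable but $M \models F_1$, the assignment $M$ must falsify at least one clause $C \in F_2$. Fix such a $C$; it is a negative clause, say $C = \neg x_1 \vee \dots \vee \neg x_k$, and $M$ falsifying $C$ means $M$ sets every $x_i$ to true. I claim $F_1 \cup \{C\}$ is unsatisfiable. Indeed, let $M'$ be any model of $F_1$; by minimality $M \subseteq M'$, so $M'$ also sets every $x_i$ to true and therefore falsifies $C$. Hence no model of $F_1$ satisfies $C$, i.e. $F_1 \cup \{C\}$ is unsatisfiable, as required.

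\medskip
The only ingredient that is not pure bookkeeping is the existence of the minimal model of $F_1$, which rests on the intersection-closure of Horn satisfiability; I expect that step (or, equivalently, invoking the standard fact that definite Horn formulas possess least models) to be the one point worth stating carefully, everything else following routinely.
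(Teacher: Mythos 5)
Your proof is correct and follows essentially the same route as the paper: both arguments hinge on the least model of $F_1$ (the paper phrases it contrapositively, taking the assignment that sets exactly the consequences of $F_1$ to true and everything else to false, and showing it would satisfy all of $F$ if every $F_1 \cup \{C\}$ were satisfiable). Your explicit derivation of that least model via intersection-closure of Horn models supplies a justification the paper leaves implicit, but the underlying idea is the same.
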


\begin{proof}
Suppose not. That is, let $F$ be unsatisfiable and for all $C \in F_2$, $F_1 \cup \{C\}$ is satisfiable. Assuming this we show that $F$ is satisfiable as well, which is clearly a contradiction. 

Let variables $x_1, x_2, \dots x_m$ are consequences of $F_1$, that is $F_1 \implies x_i$ for $i \in [m]$. Clearly, $var(C) \nsubseteq \{x_1,\dots,x_m\}$ for every $C \in F_2$, otherwise $F$ would have been satisfiable: $F_1$ is satisfiable from Lemma~\ref{lem:Horn-properties}, and if $var(C) \in \{x_1\dots,x_m\}$ for all $C$ in $F_2$, then $F_2$ is also satisfiable.  

Let $\mathcal{I}$ be the following assignment: $\mathcal{I}(x_j) = 1$ for $j\in[m]$, and $\mathcal{I}(y) = 0$ for all other variables $y$. Clearly, $\mathcal{I}(C) =1$ for all $C\in F_2$: as $C$ is negative clause and at least one variable $y$ must belong to $C$. Therefore we have $\mathcal{I}(F_2) = 1$.

On the other hand, observe that $\mathcal{I}(F_1) = 1$ as well: since $\mathcal{I}(x_j) = 1$ and $F_1 \implies x_j$, for $j\in [m]$, therefore $\mathcal{I}(F_1)$ must be $1$.


\end{proof}

Since we are discussing Horn formulas, we quickly introduce an important refinement of resolution proof system which is incomplete for CNF formulas but complete for Horn formulas: unit resolution (\unit). A resolution proof system is called a unit resolution proof systems iff for every resolution step one of its parent clause is a unit-clause. We have the following:

\begin{lemma}\cite{Bunning1999}
Let $F$ be a Horn formula, then
$$F \text{ is unsatisfiable } \iff F \derivesunit ~\Box$$
\end{lemma}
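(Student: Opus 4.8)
The backward direction will be immediate: every \unit inference is in particular a resolution inference, so a \unit-derivation of $\Box$ from $F$ is a fortiori a resolution refutation of $F$, and resolution is sound, so $F$ is unsatisfiable.

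For the forward direction the plan is induction on $|var(F)|$. In the base case $|var(F)| = 0$, an unsatisfiable formula over no variables must contain the empty clause (the only variable-free clause), so $F \derivesunit \Box$ trivially. For the inductive step, suppose $F$ is an unsatisfiable Horn formula with $|var(F)| = n \ge 1$. By the contrapositive of the second part of Lemma~\ref{lem:Horn-properties}, $F$ must contain a positive unit clause, say $(x)$. First I would pass to $F' := F|_{x=1}$, obtained by deleting every clause of $F$ containing the literal $x$ and deleting the literal $\neg x$ from every remaining clause. One checks that: (i) $F'$ is again Horn, since deleting literals or whole clauses cannot introduce a second positive literal into a clause; (ii) $x \notin var(F')$, so $|var(F')| \le n-1$; and (iii) $F'$ is unsatisfiable, because $(x)\in F$ forces $x=1$ in every model of $F$, whence $F$ is satisfiable iff $F'$ is. The induction hypothesis then yields a \unit-derivation $\pi'$ of $\Box$ from $F'$.

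It then remains to lift $\pi'$ to a \unit-derivation from $F$. Each clause of $F'$ is either already a clause of $F$ (one mentioning neither $x$ nor $\neg x$) or has the form $D$ with $D \vee \neg x \in F$; in the latter case $D$ is obtained in one step by resolving the unit clause $(x)$ against $D \vee \neg x$, which is a legitimate \unit step from $F$. So I would prepend to $\pi'$ these \unit steps deriving each needed clause $D$, and then append $\pi'$ on top; since every inference already occurring in $\pi'$ has a unit parent, the concatenation is a \unit-derivation of $\Box$ from $F$. (If one of the clauses $D \vee \neg x$ is the unit $(\neg x)$, the first prepended step already produces $\Box$.)

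The only substantive point is the inductive step, and its crux is the appeal to Lemma~\ref{lem:Horn-properties} to guarantee a positive unit clause that can be propagated; given that, restricting by $x=1$ strictly decreases the variable count while preserving both Horn-ness and unsatisfiability, so the induction goes through, and lifting the derivation is pure bookkeeping. The main thing to be careful about is that the prepended inferences really are unit inferences — which they are, since their pivot is the unit clause $(x)$ — and that the base case is read correctly, namely that $\Box \in F$ counts as a length-zero \unit-derivation.
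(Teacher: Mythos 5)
Your proof is correct and follows essentially the same route as the paper's: soundness via simulation by \res, and completeness by induction on the number of variables, using the guaranteed positive unit clause of an unsatisfiable Horn formula to propagate $x=1$ and recurse. You are in fact more careful than the paper in two places it glosses over — the explicit lifting of the \unit-derivation from $F'=F|_{x=1}$ back to $F$ by prepending the resolutions against the unit $(x)$, and the base case — but the underlying argument is the same.
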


\begin{proof}
Since \res simulates \unit and \res is sound, we only need to prove that \unit is complete. We prove this by induction on the number of variables in $F$. For $n=1$ clearly $F \equiv x \wedge \neg x$, and the Lemma follows. 

Assume that $n>1$. As $F$ is unsatisfiable Horn formula, it must have a positive unit clause $x$. Resolve $x$ with all the clauses having $\neg x$. Clearly the resulting formula $F'$ is indeed Horn. Also observe that $F'$ is unsatisfiable (the proof is exactly as in the completeness proof of resolution). Clearly, $F'$ is an unsatisfiable Horn formula over $n-1$ variables. The refutation follows from the induction hypothesis.
\end{proof}

   \begin{definition}[SLD-resolution]\cite{Gallier15} \label{def:sld-resolution}
Let $F$ be a Horn formula. Let $D, G \subseteq F$ be disjoint subsets of clauses from $F$ such that $D$ is the set of all definite Horn clauses, and $G = \{ G_1, \dots, G_q\}$ be the set of all goal (negative) clauses. Let $C$ be a goal clause. An \sld derivation of $C$ from $F$ is a sequence of negative clauses $\pi = N_0, N_1, \dots, N_m$ such that:
\begin{enumerate}
	\item $N_0 = G_j$, where $G_j$ is some goal from $F$.
	\item 
Each $N_i$ is a resolvent of $N_{i-1}$ and a definite input clause $C_i \in D$, resolved over the head $x$ of $C_i$. The variable $x$ (i.e., the literal $\neg x$) of $N_{i-1}$ is the selected variable in the body of $N_{i-1}$. We call, the clause $N_{i-1}$ a {\bf centre} clause, and the clause $C_i$, a {\bf side} clause.
	\item $N_m = C$. 
\end{enumerate}
$N_0$ is the top clause, and the $C_i$ are the input clauses of this \sld derivation. If an \sld derivation of $C$ from $F$ exists, we write $F ~\derivessld~C$. If $N_m = \Box$, we call $\pi$ an \sld refutation of the Horn formula $F$. 
	
\end{definition}

Observe that, an \sld refutation is not just a linear resolution but also an input and a negative resolution (every resolution step has at least one negative clause as parent). In fact, all the clauses $N_i$ is the \sld derivation are goal clauses. This is because the top clause $N_0$ is a goal clause. If we pick $N_0$ to be a definite clause from $F$, then all clauses in the \sld derivation will be definite clauses. Note that the input clauses of any \sld derivations are always the definite clauses.


We know that \sld proof system is sound and complete for Horn formulas. 

\begin{theorem}\cite{Ronald97}
\sld proof system is sound and complete for Horn formulas.
\end{theorem}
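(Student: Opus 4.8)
Soundness is immediate. By Definition~\ref{def:sld-resolution} an \sld refutation of a Horn formula $F$ is in particular a (linear) resolution refutation of $F$: each $N_i$ is a resolvent of $N_{i-1}$ with an input clause of $F$, and $N_m=\Box$. Since the resolution rule is sound, an easy induction gives $F\models N_i$ for all $i$, hence $F\models\Box$ and $F$ is unsatisfiable; thus $F\derivessld\Box$ implies $F$ unsatisfiable.

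For completeness, let $F$ be an unsatisfiable Horn formula and split it as $F=F_1\cup F_2$, where $F_1$ collects the definite clauses and $F_2$ the negative (goal) clauses. By the decomposition lemma above there is a goal $C\in F_2$ with $F_1\cup\{C\}$ unsatisfiable, and it suffices to produce an \sld refutation with top clause $C$ and all side clauses in $F_1$, since such a sequence is by definition an \sld refutation of $F$. If $C=\Box$ we are done; otherwise write $C=\neg x_1\vee\cdots\vee\neg x_k$ with $k\ge1$.

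The core is a ``reverse forward-chaining'' lemma. By Lemma~\ref{lem:Horn-properties} the definite formula $F_1$ is satisfiable, and (as in the proof of the decomposition lemma) the assignment making exactly the atoms $x$ with $F_1\models x$ true is its least model $M$. Such an $M$ is reached from $F_1$ by finitely many rounds of forward chaining, so each $x\in M$ has a rank $\mathrm{rk}(x)\ge1$: facts $x\leftarrow{}$ have rank $1$, and if $\mathrm{rk}(x)=r\ge2$ then $F_1$ contains a clause $x\leftarrow y_1,\dots,y_m$ with $\mathrm{rk}(y_j)<r$ for all $j$. I claim: for every finite set $S$ of atoms with $F_1\models x$ for all $x\in S$, there is a sequence of negative clauses from $\bigvee_{x\in S}\neg x$ to $\Box$, each obtained from the previous one by resolving the current goal over the head of a definite clause of $F_1$ (selecting the matching body literal). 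This is proved by induction on $\bigl(\max_{x\in S}\mathrm{rk}(x),\ |\{x\in S:\mathrm{rk}(x)\text{ maximal}\}|\bigr)$ ordered lexicographically: for $S=\emptyset$ the clause is already $\Box$; otherwise pick $x\in S$ of maximal rank, resolve the current goal with the clause of $F_1$ witnessing $\mathrm{rk}(x)$ (a fact when the rank is $1$), which removes $\neg x$ and inserts the literals $\neg y_j$ of strictly smaller rank; the new atom set is still entirely $F_1$-implied (each $y_j\in M$) and has strictly smaller measure, so the induction hypothesis completes the sequence.

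Applying the claim to $S=\{x_1,\dots,x_k\}$ finishes the proof: since $M\models F_1$ while $F_1\cup\{C\}$ is unsatisfiable, $M$ must falsify $C$, so $F_1\models x_i$ for every $i$, and $\bigvee_{x\in S}\neg x=C$; the resulting sequence is an \sld refutation of $F$ with top clause $C\in F$ using only input clauses from $F_1$. The one delicate point is the termination measure in the claim: a top-down \sld derivation need not halt in general (precisely the looping phenomenon discussed in the introduction), so completeness cannot come from bounding arbitrary derivations — one must instead piggyback on the finiteness of forward chaining, encoded here by the rank. (Alternatively, one could convert a unit-resolution refutation of $F_1\cup\{C\}$, which exists by the unit-resolution completeness lemma above, into the goal-directed form required by Definition~\ref{def:sld-resolution}.)
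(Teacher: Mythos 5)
Your proof is correct, and it takes a genuinely different route from the paper's. The paper proves completeness by induction on the number of positive unit clauses (facts) in $F$: it picks a fact $x$, performs unit propagation on $x$ to obtain a smaller unsatisfiable Horn formula $F'$, applies the induction hypothesis to get an \sld refutation of $F'$, and then lifts that refutation back to $F$ by re-inserting the literal $\neg x$ into the affected clauses and, if necessary, finishing with one more resolution against the fact $x$. You instead first reduce to $F_1\cup\{C\}$ via the decomposition lemma and then build the refutation top-down from the goal $C$, guided by the forward-chaining ranks of the atoms in the least model of $F_1$, with a lexicographic measure guaranteeing termination. The trade-off: the paper's induction is shorter but its lifting step (``propagate $\neg x$ up the refutation'') is the delicate part and is left somewhat informal; your construction directly produces a derivation in exactly the goal-directed form of Definition~\ref{def:sld-resolution}, and your explicit termination measure makes precise why a \emph{suitably chosen} selection of literals and side clauses always halts even though arbitrary \sld search may loop --- a point that connects well with the looping discussion elsewhere in the paper. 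Two small remarks: the assertion that the set of $F_1$-entailed atoms is the least model reached by finitely many rounds of forward chaining deserves a sentence of justification (it is the same assignment $\mathcal{I}$ used in the proof of the decomposition lemma), and since the paper treats clauses as sets here, the merging of duplicate body literals in your resolvents is harmless, as you implicitly assume.
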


Prolog treats clauses as multisets of literals (it retains all the repeated literals in a resolvent). For Prolog not only the order of literals within a clause matters, but also the order of clauses within a Horn formula matters. The Horn formulas in which both the clauses and the literals within the clauses are given a fixed ordering are called a {\bf Prolog program}. Given a Prolog program $\pi$ and a query $\gamma$ one can easily define the Prolog mechanism via a recursive algorithm. Interested readers are referred to~\cite{Bunning1999}.

\section{SLD-resolution for QBFs (\sldQ)}\label{sec:sldQ}
	If we allow universal variables, along with existential variables, to be used as a pivot variable, for the resolution step (as in \qurc), then \sldres can be easily extended for QBFs: just add the $\forall$-Red rules as in~\cite{BeyersdorffBC16}. This proof system is interesting, and we plan to study this in future. However, as per our knowledge, there exists no QBF solver based on \qurc. Therefore, extending \sldres to QBFs with a restriction that only existential variables are allowed, as a pivot variable in the resolution step (as in Q-resolution), is more important. We work on this direction now. As, \sldres proof system is complete only for Horn formulas, let us define quantified Horn formulas first:
	\begin{definition}\label{def:QHorn}\cite{Bunning1999}
		A quantified Boolean formula $\mathcal{F} \equiv \mathcal{Q}_1 X_1 \mathcal{Q}_2 X_2 \dots \mathcal{Q}_nX_n (C_1 \wedge \dots \wedge C_m)$ is a quantified Horn formula iff each clause $C_i$ in the matrix has at most one positive literals (existential or universal).
		We use notations as in the propositional case.
	\end{definition}

It is well known that the satisfiability problem for the quantified Horn formula can be solvable in polynomial time~\cite{Bunning1999}, however the equivalence problem for them are coNP complete~\cite{Bunning1999}. 
 
	It is clear that every clause in any quantified Horn formula $\mathcal{F}$ belongs to one of the following set:
	
	\begin{itemize}
		\item $F_{\exists}$: set of all clauses $C$ with exactly one positive existential literal $\ell$. 
		
		\item $F_{\forall}$: set of all clauses with exactly one positive universal literal.
		
		\item $F_{\text{goal}}$: set of all goal clauses with only negative (existential or universal) literals.
	\end{itemize}
	\begin{observation}\cite{Bunning1999} \label{obs:1}
		A quantified Horn formula $\mathcal{F}$ is false if and only if there exists a clause
		$C' \in F_{\forall} \cup F_{\text{goal}}$ such that $\mathcal{Q}_1 X_1 \mathcal{Q}_2 X_2 \dots \mathcal{Q}_nX_n. (F_{\exists} \wedge C')$ is false.
	\end{observation}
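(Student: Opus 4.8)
The plan is to prove the two directions of the equivalence separately, using the game-theoretic (assignment-tree / model) semantics of QBFs together with the basic Horn facts established in Lemma~\ref{lem:Horn-properties}. Write $\Pi = \mathcal{Q}_1 X_1 \cdots \mathcal{Q}_n X_n$, so that $\mathcal{F} = \Pi.\phi$ with $\phi = \bigwedge_{C \in F_\exists} C \wedge \bigwedge_{C \in F_\forall \cup F_{\text{goal}}} C$. The easy direction is the ``if'': suppose $\Pi.(F_\exists \wedge C')$ is false for some $C' \in F_\forall \cup F_{\text{goal}}$. Since $F_\exists \wedge C'$ is a conjunction of a subset of the clauses of $\phi$, any model $T \models \mathcal{F}$ would restrict to a model of $\Pi.(F_\exists \wedge C')$ (a model labels every node $\top$, and dropping conjuncts can only keep leaf labels $\top$); hence $\mathcal{F}$ has no model and is false. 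This uses only monotonicity of satisfaction under deletion of clauses.

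For the ``only if'' direction, assume $\mathcal{F}$ is false; we must locate a single bad clause $C' \in F_\forall \cup F_{\text{goal}}$. The key observation is that $\Pi.F_\exists$ by itself is \emph{true}: every clause in $F_\exists$ has exactly one positive literal, so the all-ones assignment satisfies the matrix $F_\exists$ regardless of the quantifier prefix (this is the quantified analogue of part~1 of Lemma~\ref{lem:Horn-properties}, and the winning strategy for $\exists$ is simply to play $1$ everywhere). Therefore the falsity of $\mathcal{F}$ must be ``caused'' by the extra clauses $F_\forall \cup F_{\text{goal}}$. I would make this precise by a strategy-extraction argument: fix a winning strategy $\sigma$ for the universal player in the game on $\mathcal{F}$. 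Along every play consistent with $\sigma$, the existential player eventually falsifies some clause of $\phi$; since $\exists$ can always avoid falsifying a clause of $F_\exists$ (by setting that clause's unique positive literal true — though one must check this can be done consistently, which is where care is needed), the falsified clause lies in $F_\forall \cup F_{\text{goal}}$.

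The main obstacle is upgrading this from ``for each play, some clause of $F_\forall \cup F_{\text{goal}}$ is violated'' to ``there is a \emph{single fixed} clause $C'$ that makes $\Pi.(F_\exists \wedge C')$ false.'' A priori different universal plays could blame different clauses. The cleanest route is probably to argue contrapositively with a compactness/pigeonhole-style merging of strategies: if for \emph{every} $C' \in F_\forall \cup F_{\text{goal}}$ the QBF $\Pi.(F_\exists \wedge C')$ is true, let $\tau_{C'}$ be a corresponding winning existential strategy; then combine these finitely many strategies into one existential strategy for $\mathcal{F}$ by, at each existential block, playing the all-ones assignment on variables occurring positively in some $F_\exists$-clause and otherwise deferring to a consistent choice among the $\tau_{C'}$ — and show the resulting strategy wins the game on $\mathcal{F}$, contradicting its falsity. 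Alternatively, and more in the spirit of the paper, one can avoid game-theoretic bookkeeping entirely by invoking the analogous propositional decomposition lemma (the second boxed Lemma from~\cite{Bunning1999}, stating $F_1 \cup F_2$ unsatisfiable iff some $C \in F_2$ has $F_1 \cup \{C\}$ unsatisfiable) applied levelwise, or by reducing to the propositional statement via the expansion semantics $\forall x.F \equiv F|_{x=0}\wedge F|_{x=1}$ and an induction on the number of universal variables — at each universal-expansion step one keeps track of which residual negative/positive-universal clause is responsible. I expect the induction-on-universal-variables approach to be the least painful to write rigorously, with the single genuine subtlety being that universal reduction of a positive universal literal can turn an $F_\forall$ clause into an $F_{\text{goal}}$ clause, which is exactly why the statement quantifies $C'$ over $F_\forall \cup F_{\text{goal}}$ rather than just $F_{\text{goal}}$.
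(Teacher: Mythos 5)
The paper does not actually prove this observation---it is quoted from \cite{Bunning1999} without proof---so your attempt can only be judged on its own merits. Your ``if'' direction is fine (monotonicity under deleting conjuncts). The ``only if'' direction, however, has a genuine gap, and you correctly identify where it is but do not close it: you must exhibit a \emph{single} witness that handles all clauses of $F_{\forall}\cup F_{\text{goal}}$ simultaneously, and none of the three routes you name is carried out. Worse, the one concrete merging recipe you do propose is wrong. Setting to $1$ every existential variable that occurs positively in some $F_{\exists}$-clause is the \emph{maximal} candidate, and it falsifies precisely the clauses you need to satisfy: if $C'=(\neg x)$ is a goal clause where $x$ is the head of a rule $x\leftarrow y$ whose body is not a consequence of $F_{\exists}$, then $\Pi.(F_{\exists}\wedge C')$ is true, yet your merged strategy plays $x=1$ and loses on $C'$. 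The correct single witness is the \emph{minimal} model---assign $1$ only to the existential variables forced by forward chaining from $F_{\exists}$---which is exactly the assignment $\mathcal{I}$ the paper itself constructs in its proof of the propositional decomposition lemma ($F_1\cup F_2$ unsatisfiable iff some $C\in F_2$ has $F_1\cup\{C\}$ unsatisfiable). A negative clause $C'$ is satisfied by the minimal model iff $F_{\exists}\wedge C'$ is satisfiable, so minimality is what makes one assignment serve all $C'$ at once; ``a consistent choice among the $\tau_{C'}$'' has no such property and is not even well defined when the $\tau_{C'}$ disagree.

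The second, unaddressed difficulty is that in the quantified setting the minimal-model argument must be made prefix-aware: the existential player's value for a variable in block $X_i$ may depend only on the universal moves in blocks $X_j$ with $j<i$, and the ``consequences of $F_{\exists}$'' now depend on which universal literals the opponent has falsified so far. This is where the actual content of the theorem in \cite{Bunning1999} lies (and why clauses of $F_{\forall}$, whose positive literal is universal and behaves like a negative clause from the existential player's viewpoint, must be grouped with $F_{\text{goal}}$). Your closing remark about $\forall$-expansion turning $F_{\forall}$-clauses into $F_{\text{goal}}$-clauses shows you see the issue, but an induction on the number of universal variables that tracks a responsible residual clause through the expansion $\forall x.F\equiv F|_{x=0}\wedge F|_{x=1}$ still has to be written down; as it stands the proof stops exactly at the step that requires an idea.
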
 

	\begin{observation}\label{obs:2}
		Let $\mathcal{F}$ be a false quantified Horn formula. Let us assume, that in Observation~\ref{obs:1}, we have $C' \in F_{\text{goal}}$. That is, assume $C' \in F_{\text{goal}}$, and $\mathcal{Q}_1 X_1 \mathcal{Q}_2 X_2 \dots \mathcal{Q}_nX_n. (F_{\exists} \wedge C')$ is false. Then for every negated existential literal $q$ in $C'$ (i.e, $\neg q \in C'$), we have a clause $C \in F_{\exists}$, with the variable $q$ (i.e., literal $q$) as its head.		
	\end{observation}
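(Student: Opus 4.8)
The plan is to argue by contradiction. Suppose, contrary to the statement, that there is an existential variable $q$ with $\neg q \in C'$ which is \emph{not} the head of any clause in $F_{\exists}$. I will show that then $\mathcal{Q}_1 X_1 \cdots \mathcal{Q}_n X_n.(F_{\exists} \wedge C')$ is true, contradicting the hypothesis of the observation.

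First I would extract the structural consequence of $q$ not being a head. In a quantified Horn formula every clause has at most one positive literal; a clause of $F_{\exists}$ has exactly one, namely its head, which is existential. Since $q$ is the head of no clause of $F_{\exists}$, the variable $q$ has no positive occurrence anywhere in $F_{\exists}$. Moreover $C'$ is a goal clause, so all of its literals are negative, and $q$ has no positive occurrence in $C'$ either. Hence $q$ occurs only negatively throughout the matrix $F_{\exists} \wedge C'$.

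Then I would build an explicit model. Let $T'$ be the pre-model of $\mathcal{Q}_1 X_1 \cdots \mathcal{Q}_n X_n.(F_{\exists} \wedge C')$ obtained from the assignment tree by selecting, at every internal node labelled by $q$, the child for the literal $\neg q$, and at every internal node labelled by any other existential variable $x$, the child for the literal $x$ (universal nodes keep both children). It suffices to check that every leaf of $T'$ is labelled $\top$: then the label $\top$ propagates up through the OR-nodes (existential) and AND-nodes (universal), so the root of $T'$ is labelled $\top$ and $T' \models \mathcal{Q}_1 X_1 \cdots \mathcal{Q}_n X_n.(F_{\exists} \wedge C')$. So fix a root-to-leaf path $\tau$ in $T'$; by construction $\tau(q) = 0$ and $\tau(x) = 1$ for every other existential variable $x$. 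The clause $C'$ contains $\neg q$, which $\tau$ makes true. For a clause $C \in F_{\exists}$: if $\neg q \in C$, then $C$ is satisfied by $\tau$ just as $C'$ is; otherwise the head of $C$ is a positive existential literal of a variable other than $q$ (using the first step), and $\tau$ sets that variable to $1$, so $C$ is again satisfied. Thus $\tau$ satisfies $F_{\exists} \wedge C'$, every leaf of $T'$ carries $\top$, and we reach the promised contradiction with the falsity of $\mathcal{Q}_1 X_1 \cdots \mathcal{Q}_n X_n.(F_{\exists} \wedge C')$.

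I expect no serious obstacle: the argument is bookkeeping around the assignment-tree/pre-model definitions. The one point that genuinely needs care is the case split in the last step — that ``$\neg q \in C$'' versus ``$\neg q \notin C$'' exhausts $F_{\exists}$ with the head argument available in the second case — and this is exactly where the first step is used, since $q$ being a head nowhere is what guarantees that in the second case the head of $C$ is one of the existentials we set to $1$. The conceptual content is precisely that, under the Horn structure, ``$q$ is not a head in $F_{\exists}$'' upgrades to ``$q$ occurs only negatively'', which is what lets the existential player set $q$ false without harming $F_{\exists}$.
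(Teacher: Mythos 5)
Your proof is correct and follows essentially the same route as the paper: the paper also derives a contradiction by having the existential player set all heads of $F_{\exists}$ to $1$ and the non-head variable $q$ to $0$, observing that this yields a winning strategy if some $\neg q\in C'$ lacks a corresponding head. Your write-up is merely more explicit about the pre-model construction and the fact that $q$ occurs only negatively, which the paper leaves implicit.
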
 

	\begin{proof}
		We know that the initial QBF $\mathcal{F}$ is false. Therefore, there does not exists any winning strategy for the existential player.

Consider the following strategy of the existential player: assign $1$ to all the heads of the clauses from $F_{\exists}$. Let us call this assignment $\alpha$. For sure, $\alpha$ satisfies all the clauses of $F_{\exists}$. However, as the original formula is false, this assignment does not satisfy the clause $C'$. In fact $C'|{\alpha}$ is either $0$, or is left with a bunch of negated universal literals. Otherwise, $\alpha$ can be extended as a winning strategy for the existential player. 
		
		This tells us that, every negated existential literal $(\neg \ell)$ of $C'$ has a corresponding head $\ell$ in some clause of $F_{\exists}$. If not, one can extend $\alpha$ to winning startegy for the existential player.
	\end{proof}

	\begin{observation}\label{obs:3}
		Let us assume, that in Observation~\ref{obs:1}, we have $C' \in F_\forall$. Then, for every negated existential literal $q$ in $C'$ (i.e, $\neg q \in C'$), 
we have a clause $C \in F_{\exists}$, with the variable $q$ (i.e., literal $q$) as its head.		
	\end{observation}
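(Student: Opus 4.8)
The plan is to mimic the argument of Observation~\ref{obs:2} almost verbatim, adapting it to the case $C' \in F_\forall$. Recall that in this situation the QBF $\Pi.(F_\exists \wedge C')$ is false, where now $C'$ is the clause carrying a positive universal literal together with some collection of negated literals. First I would set up the same existential strategy $\alpha$ as before: assign the value $1$ to every variable that appears as the head of some clause of $F_\exists$. Since every clause of $F_\exists$ has exactly one positive existential literal (its head), $\alpha$ satisfies all of $F_\exists$ regardless of how the remaining variables are played.

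Next I would argue that $\alpha$ cannot be completable to a winning existential strategy, because $\Pi.(F_\exists \wedge C')$ is false. Hence after applying $\alpha$, the restricted clause $C'|_\alpha$ must not be ``already satisfied and forced true'' — more precisely, $C'|_\alpha$ is either the empty clause ($0$) or is left consisting only of literals the existential player does not control in a way that can be forced, i.e. negated universal literals and possibly the positive universal literal of $C'$. The key point is that every \emph{negated existential} literal $\neg q$ of $C'$ must have been killed by $\alpha$, i.e. $\alpha(q)=1$, for otherwise $q$ was not set to $1$ by $\alpha$, which (since $q$ is not a head of any clause in $F_\exists$) means the existential player is free to set $q:=0$, thereby satisfying $C'$ and, together with $\alpha$ on the heads, satisfying the whole matrix $F_\exists \wedge C'$ — contradicting falsity. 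Since $\alpha$ sets $q$ to $1$ only when $q$ is the head of some clause $C \in F_\exists$, we conclude that for each negated existential literal $\neg q \in C'$ there is indeed a clause $C \in F_\exists$ with head $q$, which is the claim.

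I do not expect any serious obstacle here; the argument is essentially identical to Observation~\ref{obs:2}. The one point requiring a little care is the handling of the positive universal literal and the negated universal literals in $C'$: one must be sure that their presence does not let the existential player win, and this is exactly why the conclusion only concerns the negated \emph{existential} literals — the universal literals are outside the existential player's control and so the same reduction argument goes through unchanged. A second minor subtlety is that $\alpha$ is only a partial assignment (to the heads), so I would phrase the contradiction in terms of extending $\alpha$ to a full strategy rather than claiming $\alpha$ itself is a model; this is the phrasing already used in the proof of Observation~\ref{obs:2}, so it transfers directly.
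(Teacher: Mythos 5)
Your proposal is correct and follows essentially the same route as the paper: the paper's proof of Observation~3 is literally ``exactly as in Observation~2,'' i.e.\ the same strategy $\alpha$ assigning $1$ to all heads of $F_{\exists}$ and the same contradiction argument you describe. Your added remark that the positive universal literal of $C'$ (like the negated universal ones) is outside the existential player's control, and hence does not disturb the reduction, is exactly the only adaptation needed.
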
 
	
	\begin{proof}
		The proof is exactly as in Observation~\ref{obs:2}.
	\end{proof}

	Let us define \sldres for QBFs (\sldQ). An \sldQ derivation of a clause $C$ from a quantified Horn formula $\mathcal{F}$ is
exactly as in the propositional case, except that a clause $N_i$ in the sequence may
also be derived via a $\forall$-Red rule from $N_{i-1}$. That is, by deleting a universal variable
of $N_{i-1}$ which has not been blocked in $N_{i-1}$. To be precise,

	\begin{definition}[\sldQ]\label{def:sldq-res}
Let $\mathcal{F}$ be a false quantified Horn formula. Let $C$ be a goal clause. An
\sldQ derivation of $C$ from $\mathcal{F}$ is a sequence of negative clauses $\pi = N_0, N_1, \dots, N_m$
such that:

\begin{enumerate}
	\item $N_0 \in \mathcal{F}$, is an initial goal clause.

	\item Each $N_i$ is a resolvent of $N_{i-1}$ and a definite input clause $C_i$, resolved over the
head $x$ of $C_i$. The variable $x$ (i.e., the literal $\neg x$) of $N_{i-1}$ is the selected variable
in the body of $N_{i-1}$. (Observe that $x$ is an existential literal by definition).
Resolution is only allowed on existential variables.

	\item Each $N_i$ is derived from $N_{i-1}$ via a $\forall$-Red step. That is, by deleting a universal
variable $u$ from $N_{i-1}$ such that $u$ is not blocked in $N_{i-1}$.

	\item $N_m = C$.

\end{enumerate}
If $N_m = \Box$ then $\pi$ is an \sldQ refutation of the quantified Horn formula $\mathcal{F}$. We denote by $\mathcal{F}~\derivessldQ~\Box$ the fact that $\mathcal{F}$ has an \sldQ refutation. 
	\end{definition} 

	\begin{theorem}
		\sldQ is sound and complete for false quantified Horn formulas.
	\end{theorem}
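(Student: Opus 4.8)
The plan is to prove soundness and completeness separately, exploiting the observations already established.

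For \textbf{soundness}, I would argue that every rule in an \sldQ derivation is a sound inference rule for QBFs. The resolution step on an existential pivot is sound because it is exactly the Q-resolution rule restricted to the case where one parent is a goal clause and the other is definite; since Q-resolution is sound for QBFs, so is this restriction. The $\forall$-Red step is sound because dropping an unblocked universal literal from a clause is the standard universal reduction rule of \qrc, which is known to be sound. Hence if $\mathcal{F}~\derivessldQ~\Box$, then $\Box$ is a logical consequence of $\mathcal{F}$, so $\mathcal{F}$ is false. (I would also remark that all intermediate clauses $N_i$ are genuine goal clauses: $N_0$ is a goal clause, resolving a goal clause with a definite clause over the latter's head produces another clause with only negative literals, and $\forall$-Red only deletes literals, so goalness is preserved throughout.)

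For \textbf{completeness}, I would proceed by induction, mirroring the propositional completeness proof for \sldres but using Observations~\ref{obs:1}, \ref{obs:2}, and \ref{obs:3} to handle the quantifier structure. Let $\mathcal{F}$ be a false quantified Horn formula. By Observation~\ref{obs:1}, there is a clause $C' \in F_\forall \cup F_{\text{goal}}$ such that $\mathcal{Q}_1 X_1 \cdots \mathcal{Q}_n X_n. (F_\exists \wedge C')$ is false; I would use $C'$ (after $\forall$-reducing any trailing universal literals, if $C' \in F_\forall$, to turn it into a goal clause — or start directly from $C'$ if it is already in $F_{\text{goal}}$) as the top clause $N_0$. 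The induction would be on the number of negated existential literals in the current centre clause. If the centre clause has no existential literals, it consists only of universal literals; since the restricted formula is false, these can all be removed by repeated $\forall$-Red (none is blocked, as there is no existential literal to its right in the clause), yielding $\Box$. Otherwise, pick a negated existential literal $\neg q$ in the centre clause; by Observation~\ref{obs:2} (resp.\ \ref{obs:3}) there is a clause $C \in F_\exists$ with head $q$, so we can resolve the centre clause with $C$ over $q$. The key point is that this resolution step produces a new centre clause whose associated restricted formula is still false — this is the analogue of the "assign $q \leftarrow 1$" simplification in the propositional argument, and it is what lets the induction go through. Repeating, we eventually reach a centre clause with no existential literals, reducing to the base case.

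The \textbf{main obstacle} is the induction measure and the claim that the restricted QBF stays false after each resolution step. In the propositional case one literally substitutes $q = 1$ and observes unsatisfiability is preserved; in the QBF setting one must check that resolving on the head $q$ of a clause in $F_\exists$ against the centre clause does not accidentally make the formula true, and that termination is guaranteed (no infinite descent). I would control this by tracking, as in Observation~\ref{obs:2}'s proof, the assignment $\alpha$ that sets all heads of $F_\exists$ to $1$: the existential variables that must still be "resolved away" are exactly those appearing negatively in the current centre clause and set true by $\alpha$, and each resolution step strictly decreases this finite set (or, more carefully, decreases a well-founded measure based on the prefix level of the remaining existential literals), ensuring the process halts at $\Box$. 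A secondary subtlety is verifying that when we resolve, no universal literal that was unblocked becomes blocked in a way that obstructs later reductions; this follows because resolution on the \emph{rightmost} selected existential literal, combined with the ordering of literals by the prefix, keeps the universal literals reducible in the required order.
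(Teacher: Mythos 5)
Your soundness argument and the overall skeleton of your completeness argument (start from the clause $C'$ supplied by Observation~\ref{obs:1}, use Observations~\ref{obs:2} and~\ref{obs:3} to obtain definite side clauses with the right heads, finish with $\forall$-Red on the purely universal remainder) match the paper. The genuine gap is in your termination argument. You propose to induct on the number of negated existential literals in the current centre clause and assert that each resolution step strictly decreases this set, or a prefix-level refinement of it. Neither measure works: resolving $\neg q$ in the centre clause against a definite clause $q \leftarrow y_1,\dots,y_m$ \emph{replaces} one negated existential literal by up to $m$ new ones, so the count can grow; and nothing in Definition~\ref{def:QHorn} constrains the levels of the body literals $y_j$ relative to the head $q$, so the level-based measure need not decrease either. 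Indeed, a blind ``pick a negated existential literal and resolve'' strategy need not terminate at all (consider a program containing $q \leftarrow q$) --- which is exactly why the paper later has to build loop detection into its algorithm. Completeness requires exhibiting \emph{one} terminating refutation, and your sketch does not say how the side clauses are to be chosen to guarantee this. Your other flagged obstacle --- that $\Pi.(F_{\exists} \wedge N_i)$ stays false after each step --- is likewise only asserted: falsity of $\Pi.(F_{\exists} \wedge C')$ yields falsity of $\Pi.(F_{\exists} \wedge C' \wedge N_1)$ by soundness of the resolution step, not falsity of $\Pi.(F_{\exists} \wedge N_1)$ on its own, so this invariant would itself need proof.

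The paper closes these gaps by a different device: it performs exactly $k$ resolution steps, one for each negated existential literal $\neg x_1,\dots,\neg x_k$ \emph{originally present in} $C'$, and then proves that the resulting clause $C'_k$ contains no existential literal at all --- not via a decreasing measure, but by contradiction: if some $\neg q$ survived (necessarily introduced from the body of a side clause), the assignment $\alpha$ that sets all heads of $F_{\exists}$ to $1$ could be modified into a winning strategy for the existential player, contradicting the falsity of $\mathcal{F}$. So no induction and no preserved-falsity invariant are needed; a single semantic argument about $\alpha$ does the work. If you prefer to keep an inductive formulation, the measure should be something like the number of positive unit clauses (facts), as in the standard propositional \sld completeness proof, rather than the size of the centre clause.
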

	\begin{proof}
		As \qrc is sound and can simulate \sldQ, we only need to prove completeness. Let $\mathcal{F}$ is a false quantified Horn formula. Then, by Observation~\ref{obs:1}, $\exists C' \in F_{\text{goal}} \cup F_{\forall}$ such that $\mathcal{Q}_1 X_1 \mathcal{Q}_2 X_2 \dots \mathcal{Q}_nX_n (F_{\exists} \wedge C')$ is false. So we have two cases:\\
		{\bf Case 1}: When $C' \in F_{\text{goal}}$: If $C'$ has only negated universal literals, just apply the $\forall$-Red rules and derive the empty clause. Clearly, this is an \sldQ refutation. Otherwise, let $C'$ has the following existential negated literals (with some negated universal literals as well): $\neg x_1, \neg x_2, \dots, \neg x_k$. Let $C_i \in F_{\exists}$ be the clause with head $x_i$. By Observation~\ref{obs:2}, we must have a $C_i \in F_{\exists}$ for every $x_i$.

We start constructing an \sldQ refutation with the top clause $C'$ as follow: resolve $C'$ and the clause $C_1$ with pivot $x_1$. Let the resolvent is $C'_1$. Resolve $C'_1$ with the clause $C_2$ with pivot $x_2$. and so on. Call the resolvent after $k$ step as $C'_k$. 

We {\bf claim} that $C'_k$ is either an empty clause, or consists of only negated universal literal. If this claim is true, we are done. We have an \sldQ refutation of $\mathcal{F}$.  

Suppose not. Then, $C'_k$ has an existential variable say $q$. Clearly, $q$ is a negated literal. Because, each $C'_i$ is in fact a negated clause only. Observe that $\neg q \notin C'$, otherwise, it would have been resolved via an input clause containing $q$ as head. So, it must be the case, that $\neg q$ belongs to some $C_{i-1} \in F_{\exists}$ and introduced in the resolvent $C'_{i}$. 

Assuming this, we now give a winning strategy for the existential player for the QBF $\mathcal{F}$: assign $1$ to all the heads of the clauses from $F_{\exists}$, except the clause $C_{i-1}$. In $C_{i-1}$ assign $0$ to the head $x_{i-1}$ and a $0$ to the negated literal $ q$. Clearly, the assignement satisfies all clauses from $F_{\exists}$. Also, it satisfies the clause $C'$: as $\neg x_{i-1} \in C'$ and $x_{i-1}=0$. Note that $\neg x_{i-1}$ surely is present in $C'$, that is why we resolved $C'_{i-1}$ with $C_{i-1}$.  This proves our claim.

{\bf Note}: as $\mathcal{F}$ is a QBF Horn formula, we do not have any universal variables $u$ and $\neg u$ together in any resolvent $C'_i$.\\
		{\bf Case 2}: When $C' \in F_{\forall}$: This case is similar to the Case 1. Again, begin with clause $C'$ and start resolving the negated existential literals of $C'$ with the corresponding definite clauses. This is possible by Observation~\ref{obs:2}. At the end, we are left with the empty clause or with only negated universal literals.
	\end{proof}

\section{Extending Prolog to QBFs (QBF-Prolog)}\label{sec:qbf-prolog}
In this Section, we develop theory for the QBF-Prolog. We follow the ideas from~\cite{Bunning1999,QRATPlus18}.
Given facts and rules as a quantified Horn formulas $\mathcal{F} = \Pi.~\phi$, and a query Horn clause $C \equiv x \leftarrow x_1, x_2,\dots,x_n$. We need to answer whether 
$$\Pi.\phi \models \Pi.(\phi \wedge \{C\})?$$
We note that $C$ may have new variables as well, and there are no restrictions on how these variables are quantified and where they are put within the prefix. 

Let us first develop theory for the case when the query clause $C$ has no new variables. 
We show that the other case is trivial. That is, we show that the case in which the query clause contains new literals have a straightforward answer: if the new literal in the query clause is existential, then the answer is yes. If the new literal $u$ is universal, then one may drop $u$ from the query clause without effecting the answer.  

\subsection{Theory for QBF-Prolog when the query Horn clause has no new variables}\label{sec:query-no-new-variable}
Let $\mathcal{F} = \Pi.\phi$ be a quantified Boolean Horn formula. Let $C$ be a Horn clause, such that $var(C) \subseteq var(\mathcal{F})$. We need to answer whether
$$\Pi.\phi \models \Pi.(\phi \wedge \{C\})?$$ 

We use the ideas from~\cite{QRATPlus18}.\\

\noindent
{\bf Abstractions}: Given a QBF $\mathcal{F} = \Pi.\phi$ with prefix $ \Pi = \mathcal{Q}_1 X_1 \dots \mathcal{Q}_iX_i\mathcal{Q}_{i+1}X_{i+1}\dots\mathcal{Q}_nX_n$, and an $i$ with $0 \leq i \leq n$. An abstraction QBF is defined as $\mathcal{F}_i = \Pi_i.\phi$ where 
$$\Pi_i = \exists (X_1 \cup \dots \cup X_i) \mathcal{Q}_{i+1}X_{i+1}\dots \mathcal{Q}_nX_n$$
Note that $\mathcal{F}_0 \equiv \mathcal{F}$ and $\mathcal{F}_n$ is just the CNF formula $\phi$ over all existential variables $X_1\cup\dots \cup X_n$. We state the following Lemma from~\cite{QRATPlus18} regarding abstractions without proof:

\begin{lemma}[~\cite{QRATPlus18}]\label{lemma-abstraction}
Let $\mathcal{F} = \Pi.\phi$ and $\mathcal{F}' = \Pi.\phi'$ be two QBFs with same prefix. Then for all $i$, if $\Pi_i.\phi \equiv \Pi_i.\phi'$ then $\mathcal{F} \equiv \mathcal{F}'$.
\end{lemma}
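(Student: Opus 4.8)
The plan is to argue entirely within the model / pre-model semantics introduced in the Preliminaries. The first step is a convenient reformulation of the word ``model'': for any QBF $\mathcal{G} = \Pi.\psi$, a pre-model $S$ for the prefix $\Pi$ — a subtree of the assignment tree keeping both children at every universal node and exactly one child at every existential node, a notion depending only on $\Pi$ — satisfies $\mathcal{G}$ if and only if every root-to-leaf path of $S$ is a satisfying assignment of the matrix $\psi$. The forward direction is immediate, since a leaf is labelled by the value of $\psi$ on its defining complete assignment; the converse is a bottom-up induction on $S$: leaves are labelled $\top$ by hypothesis, an existential node of $S$ inherits $\top$ from its unique child in $S$, and a universal node of $S$ inherits $\top$ from both of its children in $S$.

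Next I would exploit that $\Pi$ and $\Pi_i$ live on the same assignment tree and differ only in that the first $i$ blocks, which alternate in $\Pi$, are all existential in $\Pi_i$. For a pre-model $S$ for $\Pi$, define a \emph{$\Pi_i$-slice} of $S$ to be any tree obtained from $S$ by choosing, at every universal node of $S$ lying in one of the first $i$ blocks, one of its two children and deleting the other subtree (nodes from block $i+1$ onward are left untouched). Two elementary facts then hold: (a) each $\Pi_i$-slice of $S$ is a pre-model for $\Pi_i$; and (b) the set of root-to-leaf paths of $S$ is exactly the union, over all $\Pi_i$-slices $S'$ of $S$, of the root-to-leaf paths of $S'$, since a given path of $S$ is recovered precisely by the slice that makes, in the first $i$ blocks, the same child-choices that path makes. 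Feeding (b) into the reformulation of the first step yields the key equivalence: for a pre-model $S$ for $\Pi$, $S \models \mathcal{F}$ if and only if every $\Pi_i$-slice of $S$ is a model of $\mathcal{F}_i = \Pi_i.\phi$ — and identically with $\phi'$ in place of $\phi$.

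The conclusion is then short. From $\Pi_i.\phi \equiv \Pi_i.\phi'$ and the fact that these two QBFs share the prefix $\Pi_i$, the definitions of $\models$ and $\equiv$ give that $\Pi_i.\phi$ and $\Pi_i.\phi'$ have literally the same set of models (possibly both empty). Since $\mathcal{F} = \Pi.\phi$ and $\mathcal{F}' = \Pi.\phi'$ also share their prefix, the pre-models for $\Pi$ and their $\Pi_i$-slices are the same objects in both settings, so the key equivalence gives, for every pre-model $S$ for $\Pi$, that $S \models \mathcal{F}$ iff every $\Pi_i$-slice of $S$ is a model of $\Pi_i.\phi$ iff every $\Pi_i$-slice of $S$ is a model of $\Pi_i.\phi'$ iff $S \models \mathcal{F}'$. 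Hence $\mathcal{F}$ and $\mathcal{F}'$ have the same models, and therefore $\mathcal{F} \equiv \mathcal{F}'$.

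I expect the only real work to be bookkeeping rather than mathematics: pinning down the slice operation precisely enough that (a) and (b) are transparent, and sanity-checking the extreme cases $i = 0$ (where the only $\Pi_i$-slice of $S$ is $S$ itself, so the claim degenerates to $\mathcal{F} \equiv \mathcal{F}' \Rightarrow \mathcal{F} \equiv \mathcal{F}'$) and $i = n$ (where the slices of $S$ are its individual root-to-leaf paths, and a model of $\mathcal{F}_n$ is just a satisfying assignment of $\phi$). No counting, limiting, or strategy-extraction argument is needed; the whole proof is structural manipulation of the assignment tree.
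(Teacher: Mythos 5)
The paper does not actually prove this lemma: it is stated explicitly ``without proof'' and imported from the cited reference \cite{QRATPlus18}, so there is no in-paper argument to compare yours against. Judged on its own, your proof is correct and self-contained, and it works entirely with the model/pre-model semantics that the paper's preliminaries set up: the reformulation of ``$S$ is a model'' as ``every root-to-leaf path of $S$ satisfies the matrix'' is the right lever, claim (a) is immediate from the definition of $\Pi_i$ (the first $i$ blocks become existential, the rest are untouched), and the key equivalence plus the fact that two same-prefix QBFs are $\equiv$ exactly when they have the same set of models gives the one-way implication in the correct direction (the converse would indeed fail, consistent with the lemma being stated as an implication). Two small points of bookkeeping deserve attention when you write this up. First, in (b) a path of $S$ fixes the child-choice only at the universal nodes \emph{on} that path, so you should say ``some slice containing the path'' (completing the choices arbitrarily off the path) rather than ``the slice'' determined by the path; nothing else changes. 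Second, the paper's definition labels nodes in the full assignment tree and then asks that every node of the pre-model be labelled $\top$; read literally, an existential node could be $\top$ on account of the child \emph{outside} the pre-model. Your bottom-up induction implicitly uses the standard reading in which labels propagate within the subtree, which is the intended semantics (a model is a Skolem-function tree), but it is worth flagging that you are using that reading.
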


Let $i = max_{\ell \in C}\{level(\Pi,\ell)\}$, that is, $i$ be the maximum level of literals in the query Horn clause $C$. For a clause $C$ (say, $(\ell_1 \vee \dots \vee \ell_k)$), we define $\overline{C}$ as the complement of $C$, that is, conjunction of all the negated literals of $C$ (that is, $(\neg \ell_1 \wedge \dots \wedge \neg \ell_k)$). We have the following:

\begin{lemma}[\cite{QRATPlus18}]\label{lemma:no-new-variable}
Let $\mathcal{F} = \Pi.\phi$ be a Horn QBF, $C$ be a Horn clause with $var(C) \subseteq var(\mathcal{F})$ and $i = max_{\ell \in C}\{level(\Pi,\ell)\}$. If $\Pi_i.(\phi \wedge \overline{C})~ \derivessldQ ~\Box $ then $\Pi.\phi \equiv \Pi.(\phi \wedge C)$
\end{lemma}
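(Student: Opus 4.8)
The plan is to reduce everything to the single abstraction level $i=\max_{\ell\in C}\{level(\Pi,\ell)\}$ and then invoke Lemma~\ref{lemma-abstraction}. Since $var(C)\subseteq var(\mathcal{F})$, the QBF $\Pi.(\phi\wedge C)$ has the same prefix $\Pi$ as $\Pi.\phi$, so by that lemma it suffices to prove the matrix-level equivalence $\Pi_i.\phi\equiv\Pi_i.(\phi\wedge C)$. One inclusion, $\Pi_i.(\phi\wedge C)\models\Pi_i.\phi$, is immediate: any pre-model of $\Pi_i.(\phi\wedge C)$ satisfies $\phi\wedge C$ on every branch, hence satisfies $\phi$ on every branch, hence is a model of $\Pi_i.\phi$. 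The real work is the converse, $\Pi_i.\phi\models\Pi_i.(\phi\wedge C)$.

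The first key observation is structural. By definition $\Pi_i$ merges $X_1,\dots,X_i$ into a single leading existential block, and because $i$ is the largest level occurring among the literals of $C$, \emph{every} variable of $C$ lies in this leading block. Consequently, any pre-model $T$ of $\mathcal{F}_i=\Pi_i.\phi$ is, on its top $|X_1|+\cdots+|X_i|$ internal levels, a single path (an existential node has exactly one child in a pre-model), which fixes an assignment $\sigma$ to $X_1\cup\cdots\cup X_i$; below that, $T$ is a pre-model of $\Pi_{>i}.\phi|_\sigma$ with $\Pi_{>i}=\mathcal{Q}_{i+1}X_{i+1}\cdots\mathcal{Q}_nX_n$, and $T\models\Pi_i.\phi$ exactly when $\Pi_{>i}.\phi|_\sigma$ is true. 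In particular, if $\Pi_i.\phi$ has any model then there is such a $\sigma$ with $\Pi_{>i}.\phi|_\sigma$ true (if $\Pi_i.\phi$ has no model, the direction $\Pi_i.\phi\models\Pi_i.(\phi\wedge C)$ holds vacuously, so assume otherwise).

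Next I would bring in the hypothesis. The matrix $\phi\wedge\overline C$ is again a quantified Horn formula, since each conjunct of $\overline C$ is a unit clause and a unit clause is Horn; hence by soundness of \sldQ (our earlier theorem), $\Pi_i.(\phi\wedge\overline C)\derivessldQ\Box$ implies that $\Pi_i.(\phi\wedge\overline C)$ is false, i.e. for \emph{every} assignment $\rho$ to $X_1\cup\cdots\cup X_i$ the QBF $\Pi_{>i}.(\phi\wedge\overline C)|_\rho$ is false. Apply this to the $\sigma$ produced above. All variables of $\overline C$ are set by $\sigma$, so $\overline C|_\sigma$ is a conjunction of constants. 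If $\sigma$ did not satisfy $C$, then every literal $\neg\ell_j$ of $\overline C$ would be satisfied by $\sigma$, so $(\phi\wedge\overline C)|_\sigma=\phi|_\sigma$ and therefore $\Pi_{>i}.(\phi\wedge\overline C)|_\sigma=\Pi_{>i}.\phi|_\sigma$, which is true — contradicting the falsity just established. Hence $\sigma\models C$. Since $var(C)\subseteq X_1\cup\cdots\cup X_i$, every branch of $T$ extends $\sigma$ and therefore satisfies $C$; and every branch of $T$ already satisfies $\phi$ (as $T\models\Pi_i.\phi$ means every leaf is labelled $\top$). So every branch satisfies $\phi\wedge C$, i.e. $T\models\Pi_i.(\phi\wedge C)$. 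This gives $\Pi_i.\phi\models\Pi_i.(\phi\wedge C)$, hence $\Pi_i.\phi\equiv\Pi_i.(\phi\wedge C)$, and Lemma~\ref{lemma-abstraction} delivers $\Pi.\phi\equiv\Pi.(\phi\wedge C)$.

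The step I expect to be the main obstacle — and the one I would write out most carefully — is the implication ``$\Pi_i.(\phi\wedge\overline C)$ false and $\sigma$ witnesses $\Pi_{>i}.\phi|_\sigma$ true $\Rightarrow$ $\sigma\models C$'': it is precisely here that the choice $i=\max_{\ell\in C}\{level(\Pi,\ell)\}$ is used in an essential way, since it is what makes $\overline C|_\sigma$ a ground conjunction so that $(\phi\wedge\overline C)|_\sigma$ collapses to $\phi|_\sigma$ when $\sigma\not\models C$. I would also be careful to state the semantics of a leading existential block explicitly (that $\Pi_i.\phi$ is true iff some $\sigma$ makes $\Pi_{>i}.\phi|_\sigma$ true, and that its models are ``a path fixing $\sigma$, followed by a model of the tail''), since the whole argument rests on it, and to record that $\phi\wedge\overline C$ and $\phi\wedge C$ are genuinely Horn so that the earlier soundness result and the abstraction lemma are applicable.
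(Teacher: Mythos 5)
Your proposal is correct and follows essentially the same route as the paper: reduce to the abstraction $\Pi_i$ via Lemma~\ref{lemma-abstraction}, use soundness of \sldQ to get falsity of $\Pi_i.(\phi\wedge\overline{C})$, and exploit the fact that all variables of $C$ sit in the leading existential block of $\Pi_i$ so that a model fixes an assignment $\sigma$ to them, forcing $\sigma\models C$. The only difference is presentational — the paper phrases this as a contradiction on assignment trees (Lemmas~\ref{lemma-base-case} and~\ref{lemma2}) while you argue directly, and you spell out the collapse $(\phi\wedge\overline{C})|_\sigma=\phi|_\sigma$ that the paper leaves implicit.
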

Note that in~\cite{QRATPlus18}, they proved the result via quantified unit propagation~\cite[Definition 9]{QRATPlus18}. However, here we show that the result is valid for \sldQ as well.

Lemma~\ref{lemma:no-new-variable} follows from Lemma~\ref{lemma-abstraction}, and Lemma~\ref{lemma2} below.

\begin{lemma}[\cite{QRATPlus18}]\label{lemma-base-case}
Let $\Pi.\phi$ be a Horn QBF, with $\Pi =  \mathcal{Q}_1 X_1\dots\mathcal{Q}_nX_n$ and $C$ be a Horn query clause with $var(C) \subseteq X_1$, and $i = max_{\ell \in C}\{level(\Pi,\ell)\} = 1$. If $\Pi_1.(\phi \wedge \overline{C})~ \derivessldQ ~\Box $ then $\Pi_1.\phi \equiv \Pi_1.(\phi \wedge C)$. 

As $\Pi_1.\phi$ is equal to $\Pi.\phi$. We have that if $\Pi.(\phi \wedge \overline{C})~ \derivessldQ ~\Box $ then $\Pi.\phi \equiv \Pi.(\phi \wedge C)$.
\end{lemma}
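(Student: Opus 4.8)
The plan is to establish the displayed implication $\Pi_1.(\phi\wedge\overline{C})\derivessldQ\Box \Rightarrow \Pi_1.\phi \equiv \Pi_1.(\phi\wedge C)$; the closing sentence of the lemma then costs nothing, since $\mathcal{Q}_1=\exists$ makes $\Pi_1$ literally the same prefix as $\Pi$, so $\Pi_1.\phi=\Pi.\phi$ and $\Pi_1.(\phi\wedge C)=\Pi.(\phi\wedge C)$. I would prove the equivalence one direction at a time. The direction $\Pi.(\phi\wedge C)\models\Pi.\phi$ needs no hypothesis: since $var(C)\subseteq X_1$ no new variable is introduced, the two QBFs share the prefix $\Pi$ and hence the same pre-models, and any leaf labelling that satisfies $\phi\wedge C$ in particular satisfies $\phi$; so every model of $\Pi.(\phi\wedge C)$ is already a model of $\Pi.\phi$. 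Everything substantive is in the converse $\Pi.\phi\models\Pi.(\phi\wedge C)$.

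For the converse I would fix an arbitrary model $T$ of $\Pi.\phi$ and show $T\models\Pi.(\phi\wedge C)$. Because $X_1$ is the outermost block and it is existential, $T$ contains exactly one node at depth $|X_1|$, and the root-to-that-node path determines an assignment $\alpha$ to the variables of $X_1$ that every leaf of $T$ extends. As $var(C)\subseteq X_1$, the clause $C$ is decided by $\alpha$ alone, and as the assignment trees of $\Pi.\phi$ and $\Pi.(\phi\wedge C)$ coincide, it suffices to prove $\alpha\models C$: then every leaf of $T$ satisfies $C$ on top of $\phi$, so $T$ stays a model once the clause is adjoined.

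To prove $\alpha\models C$ I would argue by contradiction. If $\alpha\not\models C$, then $\alpha$ falsifies every literal of $C$, hence satisfies every literal of $\overline{C}$, i.e. $\alpha\models\overline{C}$. Since $\overline{C}$ mentions only variables of $X_1$ and every leaf of $T$ extends $\alpha$, every leaf of $T$ satisfies $\phi\wedge\overline{C}$, so $T\models\Pi.(\phi\wedge\overline{C})$ and this QBF is true. On the other hand $\phi\wedge\overline{C}$ is still a quantified Horn formula -- $\overline{C}$ is a conjunction of unit clauses, and a unit clause is trivially Horn -- so the hypothesis $\Pi.(\phi\wedge\overline{C})\derivessldQ\Box$ together with soundness of \sldQ for false quantified Horn formulas (proved above) forces $\Pi.(\phi\wedge\overline{C})$ to be false. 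This contradiction gives $\alpha\models C$, and the lemma follows.

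I do not expect a genuine obstacle here; the proof is short, and its only delicate points are model-theoretic bookkeeping -- that a model of a prenex QBF whose leading block is existential assigns that entire block the same way on all of its branches, that moving between the assignment trees of $\Pi.\phi$ and $\Pi.(\phi\wedge C)$ is harmless because no variable is added, and that $\phi\wedge\overline{C}$ meets the Horn hypothesis required to invoke soundness of \sldQ. The conceptual core is simply that a branch witnessing the failure of $C$ would witness the success of $\overline{C}$, turning a model of $\Pi.\phi$ into a model of $\Pi.(\phi\wedge\overline{C})$ and so contradicting the assumed refutation. This is the $i=1$ base case; Lemma~\ref{lemma-abstraction} will later be used to lift such equivalences from an abstraction $\Pi_i$ back to the full prefix $\Pi$.
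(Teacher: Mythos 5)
Your proof is correct and takes essentially the same route as the paper: assume a model of $\Pi.\phi$ falsifying $C$, observe that it then satisfies $\overline{C}$ on every branch and hence models $\Pi.(\phi\wedge\overline{C})$, contradicting the falsity of that QBF forced by soundness of \sldQ. You are in fact somewhat more careful than the paper, which leaves implicit both that all leaves of the model agree on the outermost existential block $X_1$ (so falsification on one path gives $T\models\Pi.(\phi\wedge\overline{C})$) and that $\phi\wedge\overline{C}$ is still Horn.
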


\begin{proof}
Suppose not, and there exists a $T$ (assignment tree) with $T \models \Pi.\phi$ but $T \not \models \Pi.(\phi \wedge C)$. It follows that there exists a path $\tau$ in $T$ with $\tau(C) = \perp$. As $\Pi_1.(\phi \wedge \overline{C})~\derivessldQ~\Box$, the Horn QBF $\Pi.(\phi \wedge \overline{C})$ is unsatisfiable, and so $T \not \models \Pi.(\phi \wedge \overline{C})$. Since $\tau(C) = \perp$, we have $\tau(\overline{C}) = \top$ and hence $T \models \Pi.(\phi \wedge \overline{C})$, a contradiction.
\end{proof}

\begin{lemma}[\cite{QRATPlus18}]\label{lemma2}
Let $\Pi.\phi$ be a Horn QBF, $C$ be a Horn query clause and $i = max_{\ell \in C}\{level(\Pi,\ell)\}$. If $\Pi_i.(\phi \wedge \overline{C})~ \derivessldQ ~\Box $ then $\Pi_i.\phi \equiv \Pi_i.(\phi \wedge C)$.
\end{lemma}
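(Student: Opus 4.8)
The plan is to prove Lemma~\ref{lemma2} by adapting, essentially verbatim, the argument used for the base case in Lemma~\ref{lemma-base-case}; the only new observation needed is that the abstraction prefix $\Pi_i$ collects \emph{all} variables of the query clause $C$ into its single outermost existential block. Indeed $\Pi_i = \exists(X_1\cup\dots\cup X_i)\,\mathcal{Q}_{i+1}X_{i+1}\cdots\mathcal{Q}_n X_n$, and since $i=\max_{\ell\in C}\{level(\Pi,\ell)\}$ and $var(C)\subseteq var(\mathcal{F})$ we have $var(C)\subseteq X_1\cup\dots\cup X_i$. I would also first record that $\phi\wedge\overline C$ is again a Horn matrix (the clauses of $\overline C$ are unit clauses, hence Horn), so the hypothesis $\Pi_i.(\phi\wedge\overline C)~\derivessldQ~\Box$ is meaningful, and that by soundness of \sldQ --- it is simulated by the sound system \qrc --- this hypothesis gives that $\Pi_i.(\phi\wedge\overline C)$ is unsatisfiable.

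For the direction $\Pi_i.(\phi\wedge C)\models\Pi_i.\phi$, which is immediate, I would just note that a pre-model all of whose leaves are labelled $\top$ under the matrix $\phi\wedge C$ is also so labelled under $\phi$, so every model of $\Pi_i.(\phi\wedge C)$ is a model of $\Pi_i.\phi$. The content is the other direction, which I would argue by contradiction: assume an assignment tree $T$ with $T\models\Pi_i.\phi$ but $T\not\models\Pi_i.(\phi\wedge C)$. Since the outermost block of $\Pi_i$ is existential and a model retains exactly one child at every existential node, $T$ has a single stem through that block; let $\sigma$ be the partial assignment to $X_1\cup\dots\cup X_i$ read off this stem, so that every root-to-leaf path of $T$ extends $\sigma$. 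As $T\models\Pi_i.\phi$ every path satisfies $\phi$, so $T\not\models\Pi_i.(\phi\wedge C)$ forces a path $\tau_0$ with $\tau_0(C)=\perp$; because $var(C)$ is contained in the variables assigned by $\sigma$, this means $\sigma(C)=\perp$, i.e.\ $\sigma(\overline C)=\top$, hence \emph{every} path of $T$ makes $\overline C$ true. Thus $T$ is a model of $\Pi_i.(\phi\wedge\overline C)$, contradicting its unsatisfiability. Combining the two directions gives $\Pi_i.\phi\equiv\Pi_i.(\phi\wedge C)$.

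The step I expect to require the most care is the claim that a model $T$ of $\Pi_i.\phi$ determines a \emph{single} assignment $\sigma$ on the entire first block, i.e.\ that no branching occurs before the block $X_{i+1}$. This rests on the definition of pre-model/model together with the convention that the variable order in the assignment tree respects the quantifier prefix; it is precisely the feature the $i=1$ proof of Lemma~\ref{lemma-base-case} already uses, and the reason Lemma~\ref{lemma-abstraction} is invoked is that, once Lemma~\ref{lemma2} is established for the abstraction $\Pi_i$, the statement $\Pi.\phi\equiv\Pi.(\phi\wedge C)$ of Lemma~\ref{lemma:no-new-variable} follows by transporting the equivalence back through all the outer blocks.
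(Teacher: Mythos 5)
Your proposal is correct and follows essentially the same route as the paper: the paper's proof of this lemma is exactly the observation that all variables of $C$ land in the outermost existential block of $\Pi_i$, after which it simply invokes Lemma~\ref{lemma-base-case}. You inline that base-case argument rather than citing it, and your explicit justification that a model of $\Pi_i.\phi$ fixes a single assignment $\sigma$ on the whole first block (so one falsifying path for $C$ forces $\overline{C}$ to hold on every path) actually tightens a step the paper's base-case proof glosses over.
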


\begin{proof}
Since all variables from $C$ are existentially quantified in the abstraction $\Pi_i.(\phi \wedge \overline{C})$, they all belongs to the first quantifier block and the Lemma follows from Lemma~\ref{lemma-base-case}.
\end{proof}

Using the ideas from~\cite{Bunning1999} and Lemma~\ref{lemma:no-new-variable}, we next develop an efficient implementation model. We start by presenting an exponential time algorithm for the same.

\subsection{Exponential time algorithm for QBF-Prolog when the query clause has no new variables}
In this Section we develop an efficient implementation for the QBF-Prolog when the query clause has no new variables. In particular, we need to develop an efficient implementation details for the Lemma~\ref{lemma:no-new-variable}. However, Prolog does not work with clauses, but with multi-clauses as a lists. Therefore, we need to do more. 

As in the propositional case, QBF-Prolog treats clauses as multisets, let us call this proof system as $\text{SLD-Q}_{\text{multi}}$-resolution. That is, $\text{SLD-Q}_{\text{multi}}$-resolution proof system is just an \sldQ proof system which retains every multiple copies of any literals within a resolvent clause. For example: ($\leftarrow x, y), (y \leftarrow x, z)~ \derivessldQmulti~\leftarrow x,x,z$, and not $\leftarrow x,z$. Also, QBF-Prolog treats clauses as list of literals. Therefore, the order of literals in any clause matters for QBF-Prolog. To make it clear, let us define the proof system precisely for QBF-Prolog.

\begin{definition}[$\text{SLD-Q}_\text{Prolog}$-resolution]\cite[Chapter 5]{Bunning1999}
An $\text{SLD-Q}_\text{Prolog}$-resolution derivation of a quantified Horn formula is an $\text{SLD}_{\text{multi}}$-resolution derivation in which clauses are regarded and processed as lists. 
In every resolution step, the pivot variable $x$ which is resolved upon must occur as a negative literal at the start of the list for the centre clause ($\neg x, \neg y_1, \neg y_2, \dots \neg y_r $) and as a positive literal at the start of the list for the side clause ($x, \neg s_1, \neg s_2, \dots, \neg s_p$). The resolvent list is formed by concatenating the remainder of the centre clause to the remainder of the side clause ($ \neg s_1, s_2, \dots, \neg s_p,\neg y_1, \neg y_2, \dots, \neg y_r $). There are no additional constraints for the $\forall$-Red rules.
\end{definition}

In QBF-Prolog, not only the order of literals within a clause matters, but also the order of clauses within a  quantified Horn formula matters. A {\bf QBF-Prolog program} $P$ is a quantified Horn formula in which both the clauses and the literals within the clauses are given a fixed ordering. Positive literals always appear in the front of the lists for any definite clause.

Now, let us restate the problems mentioned at the begining of Section~\ref{sec:query-no-new-variable}, with the above discussed modifications.

Instead of a quantified definite Horn formula $\mathcal{F} = \Pi.\phi$, we have a QBF-Prolog program $P = \Pi.\phi$ in which both the clauses and the literals within a clause have fixed ordering. In addition, clauses in $\phi$ are given as a list, with the positive literal in the front. We also have given the query definite Horn clause $C = (x \vee \neg x_1 \vee \neg x_2 \vee \dots \vee \neg x_n)$ as a list of literals, with $var(C) \subseteq var(P)$. We need to answer whether $P=\Pi.\phi \models \Pi.(\phi\wedge C)$? 

Let $i = max_{\ell \in C}\{level(\Pi,\ell)\}$. Lemma~\ref{lemma:no-new-variable} says that our Prolog Algorithm should return a {\bf yes}, for the above problem, if we can search an \sldQ refutation of the following QBF-Prolog program: 
\begin{align}\label{PEquation}
P' = \Pi_i.(\neg x \wedge x_1 \wedge x_2 \dots \wedge x_n \wedge \phi)
\end{align}
Note that all the literals of $C$ becomes existential literals in the above QBF-Prolog program. To be precise, we need to search an $\text{SLD-Q}_\text{Prolog}$-resolution refutation. We first present an exponential time recursive algorithm (Algorithm~\ref{ProloAlgo-for-QBFs}) to search a required $\text{SLD-Q}_\text{Prolog}$-resolution refutation. The Algorithm search a refutation with $\neg x$ as the top clause. That is, it start searching the refutation with negation of the head of the query clause $C$. We call the Algorithm~\ref{ProloAlgo-for-QBFs} with the following input parameters: the QBF-Prolog program $P'$, and $\neg x$ ($\neg x$ will copy in the list $\gamma$).

		\begin{algorithm}[H]
		\DontPrintSemicolon
		\SetKwFunction{FMain}{\bf QBF-Prolog-Recursive-Algo}
		\SetKwProg{Fn}{}{}{}
		\Fn{\FMain{quantified cnf-formula *$P$, list of quantified cnf-variables * $\gamma$}}{
			\KwIn{QBF-Prolog program $P$ as a list of definite quantified Horn clauses, and a goal list $\gamma$ as a list of quantified (negated) literals. For our problem, $\gamma$ is the negation of the head of the query definite clause $C$ and $P$ is equalt to $P'$ (Refer Equation~\ref{PEquation}).}
			\KwOut{\textit{{\bf true}}; if the literals of the goal list are derivable from $P$ using the prolog procedure, \textit{{\bf false}} (or possibly a loop); otherwise}
			\uIf{\textnormal{is-empty}($\gamma$)}{
				return \textit{{\bf true}}}
			\Else{
				\ForEach{clause $C \in P$}{
					\uIf{\textnormal{first-pos-lit}($C$)==\textnormal{first}($\gamma$)}{
						\tcc{\textnormal{first-pos-lit}($C$) returns the only positive literal of $C$, similarly, \textnormal{first}($\gamma$) returns the first existential literal of $\gamma$.~If they are same, call recursively the main program with $P$ as the first argument and the goal list $\gamma'$ as the second argument.~$\gamma'$ is obtained as follows:~Let \textnormal{rest}($C$) outputs exactly the same list except the first positive existential literal.~Similarly,~we have \textnormal{rest}($\gamma$) returns the same list except the first existential literal of $\gamma$.~And\newline $\gamma' = $ \textnormal{append}\big(\textnormal{rest}($C$), \textnormal{rest}($\gamma$)\big), where append concatenates the rest of $\gamma$ list to the rest of $C$ and then removes all the universe literals which are not blocked in the concatenated sequence.}
						\uIf{\textnormal{QBF-Prolog-Recursive-Algo}$(P$, $\gamma')$}{
							\tcc{When the recursive call return sucess, we return success}
							return \textit{{\bf true}}
						}
					}	
				}
				\tcc{For $\gamma$, we considered each clause from $P$ in sequence, but unable to proccess $\gamma$ completely, so return a failure}
				return \textit{{\bf false}}
			}

		}
		\textbf{End Function}
		\caption{Recursive algorithm for Prolog search mechanism for QBFs inspired from \cite{Bunning1999}.}
		\label{ProloAlgo-for-QBFs}
	\end{algorithm}

QBF-Prolog-Recursive-Algo, uses $\text{SLD-Q}_\text{Prolog}$-resolution with a combination of depth first search from left to right with backtracking. Observe that Algorithm~\ref{ProloAlgo-for-QBFs}, may enters an infinite loop, even for the propositional Prolog program.

Infact, due to the deterministic approach, the completeness of the $\text{SLD-Q}_\text{Prolog}$-resolution has lost.

The above QBF-Prolog mechanism can be well explained with the concept of a {\bf refutation tree}~\cite{Bunning1999}. Let $P$ be a QBF-Prolog program and $\gamma$ be the qoal query. The refutation tree $T_P(\gamma)$ is a tree such that its root is labelled with the query clause $\gamma$. Every node labelled with a goal clause $\leftarrow x_1,x_2,\dots,x_n$ has a successor corresponding to every definite Horn clause $x_1 \leftarrow y_1,\dots,y_k \in P$. The successors are labelled with the $\text{SLD-Q}_\text{Prolog}$-resolution resolvent $\leftarrow y_1\dots y_k, x_2 \dots x_n$ in this order.  

For example, Figure~\ref{fig1:refutation-tree} shows a refutation tree for the QBF-Prolog program
$$P = \exists a,b,c,d,e \forall f \exists g.(a \leftarrow e,c,g) \wedge (a \leftarrow d,b) \wedge (d \leftarrow b,f) \wedge (e \leftarrow f) \wedge (b) \wedge (g) $$ and a query clause $\leftarrow a$.

\begin{figure}[h!] 
\centering{
\begin{tikzpicture}[scale=.8, transform shape]

\node[ellipse,draw] (a) at (0, 0) {{\large$\hspace{1mm} \Box \hspace{1mm}$}};
\node[ellipse,draw] (b) at (0, 1.5) {{\large$\leftarrow f$}};
\node[ellipse,draw] (c) at (0, 3.2) {{\large$\leftarrow f, b$}};
\node[ellipse,draw] (d) at (0, 5) {{\large$\leftarrow b,f,b$}};
\node[ellipse,draw] (e) at (0, 6.7) {{\large$\leftarrow d,b$}};
\node[ellipse,draw] (f) at (-3, 8.5) {{\large$\leftarrow a$}};
\node[ellipse,draw] (g) at (-5, 6.7) {{\large$\leftarrow e,c,g$}};
\node[ellipse,draw] (h) at (-5, 5) {{\large$\leftarrow f,c,g$}};
\node[ellipse,draw] (i) at (-5, 3.2) {{\large$\leftarrow f,c$}};

\draw[black, ->] (b) -- (a) node[pos=.5,above, right] {$\forall$-Red steps};
\draw[black, ->] (c) -- (b);
\draw[black, ->] (d) -- (c);
\draw[black, ->] (e) -- (d);
\draw[black, ->] (f) -- (e);
\draw[black, ->] (f) -- (g);
\draw[black, ->] (g) -- (h);
\draw[black, ->] (h) -- (i);

\end{tikzpicture}
  }
  \caption{Refutation tree for QBF-Prolog program $P$ and query $\leftarrow a$ }\label{fig1:refutation-tree}
\end{figure}
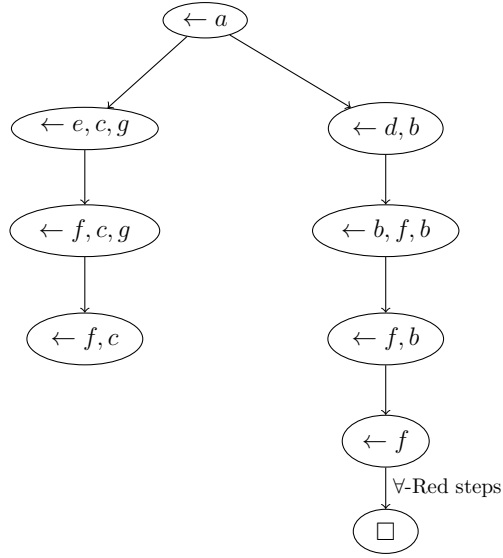

While searching for a refutation, given a goal query $\gamma$ and $P$, three outcomes are possible:
\begin{itemize}
	\item Empty clause $\Box$ is found. We say that the algorithm found the refutation. The result is a {\bf yes}.
	\item The entire tree is searched without finding a node with $\Box$. The result is a {\bf no}, which means that the derivation of the empty clause is not possible.
	\item During the search algorithm follow an infinite branch in the refutation tree. The result is a {\bf loop}.
\end{itemize}

For a QBF-Prolog program $P$ and a Horn query clause $C$, let us define the following function:
	\[
	\text{output}(P,C) = \left\{ 
	\begin{array}{l l}
	\text{yes} & \quad \text{if the algorithm stops with a success}\\
	\text{no} & \quad \text{if the algorithm stops with a failure}\\
	\text{loop} & \quad \text{if the algorithm enters a loop}
	\end{array} \right.
	\]

We next give a linear time algorithm for computing the function $\text{output}(P,C)$.

\subsection{Linear time algorithm for computing the output function}\label{sec:linear-algo-for-qbf-prolog}
Given a QBF-Prolog program $P$ and a query Horn clause $C$, we describe an algorithm which computes the $output(P,C)$ function in linear time, in the length of the inputs. The challenge is to detect whether the algorithm enters into loop in linear time. 

As in the propositional case, we develop QBF-Prolog which is not only interested in finding a refutation, but all possible refutations. Following the ideas from~\cite{Bunning1999}, our proposed algorithm maintains five distinct states for each existential variable of the QBF Prolog program. During the course of processing, each existential variable $x$ is in one of the following states:
\begin{itemize}
	\item $state(x) = new$: the goal $\leftarrow x$ (ie., $\neg x$) has not yet processed.
	\item $state(x) = yes$: the goal $\leftarrow x$ can be refuted in finite number of ways and the refutation tree $T_P(\leftarrow x)$ does not contain an infinite branch.
	\item $state(x) = no$: refutation of the goal $\leftarrow x$ is not possible.
	\item $state(x) = loop$: algorithm follows an infinite branch in the refutation tree $T_P(\leftarrow x)$ while searching for the refuation. That is, algorithm goes into a loop.
	\item $state(x) = inf$: there exists infinitely many ways to refute the goal $\leftarrow x$. Or, after finitely many refutations, the algorithm goes into a loop (follows an infinite branch in the refutation tree).
\end{itemize}   

Any universal variable $u$ of the QBF-Prolog program can be in any one of the following two states. These state depends on the clause in which variable $u$ belongs to.
\begin{itemize}
	\item $blocked(u)=yes$: let $C$ be a clause and $u \in C$, and one cannot apply the $\forall$-red step to $u$ in the clause $C$. To be precise,  there exists an existential literal $\ell \in C$ with $u \leq_{\Pi} \ell$, and $(state(\ell) = no~or~state(\ell) = loop)$. 	

	\item $blocked(u) =no$: let $C$ be a clause and $u \in C$. There does not exists an existential literal $\ell$ with $u \leq_{\Pi} \ell$. That is, one can apply a $\forall$-red step to $u$ in the clause $C$.\\ 
{\bf Or}, for any existential literal $\ell \in C$ with $u \leq_{\Pi} \ell$ we have $state(\ell) =yes$. That is, one can perform the $\forall$-red step on u in future after refuting all the existential literals blocking $u$.
\end{itemize}
Before proceeding further, we present a simple example to clarify the defintion of loop and inf. 

Consider the following slightly modified QBF-Prolog program from~\cite{Bunning1999}: 
$$P = \exists a \forall d \exists b,c.(a \leftarrow d,b,c)\wedge (b) \wedge (b \leftarrow b) $$
We have $state(b) = inf$: since we have a successful refutation of $\leftarrow b$ followed by an infinite branch (loop) in $T_P(\leftarrow b)$. Refer, Figure~\ref{fig2}. 

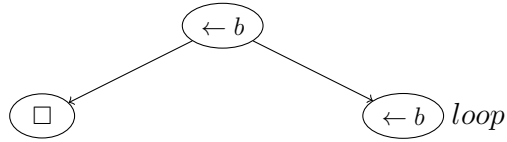
\begin{figure}[h!] 
\centering{
  \begin{tikzpicture}[scale=.8, transform shape]

\node[ellipse,draw,label=right:{\Large$loop$}] (a) at (0, 0) {{\large$\leftarrow b$}};
\node[ellipse,draw] (b) at (-3, 1.5) {{\large$\leftarrow b$}};
\node[ellipse,draw] (c) at (-6, 0) {{\large$\hspace{1mm} \Box \hspace{1mm}$}};

\draw[black, ->] (b) -- (a);
\draw[black, ->] (b) -- (c);

\end{tikzpicture}

  }
  \caption{Refutation tree $T_P(\leftarrow a)$ showing that $state(b) = inf$ }\label{fig2}
\end{figure}

However, we have $state(a) = loop$: since in the refutation tree $T_P(\leftarrow a)$, there exists no refutation for $\leftarrow d,c$ and also has an infinite branch. Refer, Figure~\ref{fig3}.\\

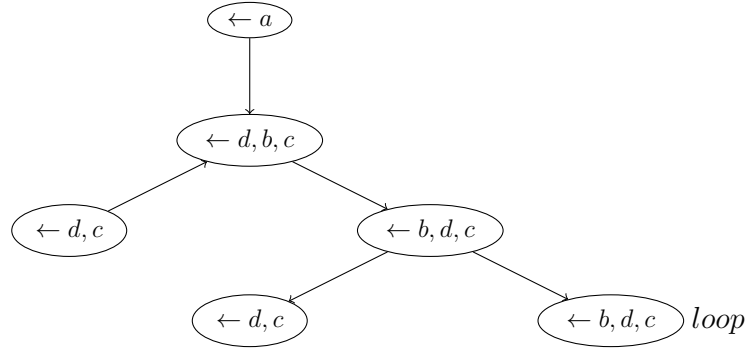
\begin{figure}[h!] 
\centering{
  \begin{tikzpicture}[scale=.8, transform shape]

\node[ellipse,draw,label=right:{\Large$loop$}] (a) at (0, 0) {{\large$\leftarrow b,d,c$}};
\node[ellipse,draw] (b) at (-3, 1.5) {{\large$\leftarrow b,d,c$}};
\node[ellipse,draw] (c) at (-6, 0) {{\large$\leftarrow d,c$}};
\node[ellipse,draw] (d) at (-6, 3) {{\large$\leftarrow d,b,c$}};
\node[ellipse,draw] (e) at (-9, 1.5) {{\large$\leftarrow d,c$}};
\node[ellipse,draw] (f) at (-6, 5) {{\large$\leftarrow a$}};

\draw[black, ->] (b) -- (a);
\draw[black, ->] (b) -- (c);
\draw[black, ->] (d) -- (b);
\draw[black, ->] (e) -- (d);
\draw[black, ->] (f) -- (d);

\end{tikzpicture}

  }
  \caption{Refutation tree $T_P(\leftarrow a)$ showing that $state(a) = loop$ }\label{fig3}
\end{figure}

Observe that given the state of all existentil literals of a clause, computing the states of a universal variable is simple. 
Denote it by {\bf comp-block($u, C$)}.

So, consider the problem of computing $state(x)$ for an existential literal $x$, given a QBF-Prolog program $P$. That is, we need to refute $\leftarrow x$. Clearly, for computing the $\text{SLD-Q}_\text{Prolog}$-resolution refututation of $\leftarrow x$, one has to perform resolution of the clause $\leftarrow x$ with a clause having head $x$. There can be several clauses in $P$ with head $x$. For computing $state(x)$, we need to consider each of them in the order in which they appear in $P$. However, in order to make discussion simple, let us focus on just one clause $C$ with head $x$ and compute the intermediate result $state(x,C)$. We use this result for computing the final one. Let $C = x \leftarrow x_1,x_2,\dots,x_n$ be a clause with head $x$. Note that some of the $x_i$'s may be universal, but $x$ is existential.
\begin{itemize}
	\item $state(x,C)=yes$: if $n=0$, \\ 
{\bf Or}, for all existential $x_i$, $1\leq i \leq n, state(x_i) = yes$ and for all universal variables $x_i$ $1\leq i \leq n, blocked(x_i) = no$.
	\item $state(x,C) = inf$: if for all existential $x_i, 1\leq i \leq n, state(x_i) \in \{yes, inf\}$, and for all universal $x_i, 1\leq i \leq n, blocked(x_i) = no$, and there exist an $x_j, 1 \leq j \leq n, state(x_j) = inf$.

	\item $state(x,C) = loop$: there exists an existential $x_i, 1\leq i \leq n, state(x_i) = loop$, and for all existential $x_k, 1 \leq k < i, state(x_k) = \{yes, inf\}$, and for all universal $x_k, 1 \leq k < i, blocked(x_k) = no$.\\
{\bf Or}, there exists an existential $x_i, 1\leq i \leq n$, such that while refuting $\leftarrow x_i$, the refutation of $\leftarrow x$ is called recursively, and for all existential $x_k, 1 \leq k < i, state(x_k) = \{yes,inf\}$, and for all universal $x_k, 1 \leq k < i, blocked(x_k) = no$.\\
{\bf Or}, there exists $x_i, 2 \leq i \leq n, state(x_i) = no$, and there exists an existential $x_k, 1\leq k < i, state(x_k) = inf$, and for all existential $x_k, 1 \leq k < i, state(x_k) = \{yes, inf\}$, and for all universal $x_k, 1 \leq k < i, blocked(x_k) = no$.

	\item $state(x,C) = no$: there exists an existential $x_i, 1 \leq i < n, state(x_i) = no$, and for all existential $x_k, 1 \leq k < i, state(x_k) = yes$, and for all universal $x_k, 1 \leq k < i, blocked(x_k) = no$.\\
{\bf Or}, there exists a universal variable $x_k, 1 \leq k \leq n, blocked(x_k) = yes$.
\end{itemize}
Using these intermediate results, we are ready to describe the procedure of refuting $\leftarrow x$ with a QBF-Prolog program $P$. That is, procedure to compute $state(x)$, for any existential variable $x$. 

Let the definite Horn clauses $C_x^1, C_x^2,\dots, C_x^m$ be the clauses of $P$ with head $x$.
\begin{itemize}
	\item $state(x) = yes$: there exists an $i, 1 \leq i \leq m, state(x, C_x^i) = yes$, and for all $k, 1 \leq k < i, state(x,C_x^k) = no$, and for all $j, i < j \leq n, state(x,C_x^j) \not \in \{inf, loop\}$.
	
	\item $state(x) = inf$: there exists an $i, 1 \leq i \leq m, state(x,C_x^i) = inf$, and for all $k, 1 \leq k < i, state(x,C_x^k) = no$\\
{\bf Or} there exists an $i, 1 \leq i \leq m, state(x, C_x^i) = yes$, and for all $k, 1 \leq k < i, state(x,C_x^k) = no$, and for all $j, i < j \leq m, state(x,C_x^j) \in \{loop, inf\}$.

	\item $state(x) = loop$: there exists an $i, 1 \leq i \leq m, state(x, C_x^i) = loop$, and for all $k, 1\leq k < i, state(x,C_x^k) = no$.

	\item $state(x) = no$: for all $i, 1 \leq i \leq m, state(x,C_x^i) = no$.
\end{itemize}

Now, we are ready to present our linear time algorithm for computing the function $output(P,C)$. Our algorithm is inspired and a slight modification of the Algorithm from~\cite[Algorithm 5.16]{Bunning1999}.\\

\noindent
{\bf QBF-Prolog-linear: linear time algorithm}\\
Recall the problem: given a QBF-Prolog program $P=\Pi.\phi$ and a query definite Horn clause $C = (x\leftarrow x_1,\dots,x_n) $, with $var(C) \subseteq var(P)$, answer the following:
\begin{align*}
\text{Is }\Pi.\phi \models \Pi.(\phi \wedge C)?
\end{align*}
Following Lemma~\ref{lemma:no-new-variable}, we modify $P$ as $P' = \Pi_i.(\neg x \wedge x_1 \wedge\dots\wedge x_n \wedge \phi)$ and want to find an $\text{SLD-Q}_\text{Prolog}$-resolution of $\leftarrow x$ from $P'$ efficiently. 

In other words, we need to compute $state(x)$. We design the following function for the same:\\ 
{\bf Refutation(x, loc-h(x))}: the function computes $state(x)$. It takes two parameters: variable $x$ and `loc-h(x)': list of clauses from $P'$ with head $x$. It output one of the following: $no, yes, loop,inf$. We maintain these outputs as enum data structure: enum status = $\{$ no, yes, loop, inf $\}$. (Thus, here enumeration variable `no' represents $0$). Finally we set the function $output(P,C)$ as follows:
	\[
	\text{output}(P,C) = \left\{ 
	\begin{array}{l l}
	\text{yes} & \quad \text{if Refutation(x,loc-h(x))}\in \{yes, inf\}\\
	\text{Refutation(x,loc-h-x)} & \quad \text{otherwise}
	\end{array} \right.
	\]

Here is our efficient pseudocode:

\begin{algorithm}[H]
	\DontPrintSemicolon
	\SetKwFunction{FMain}  {status \bf QBF-Prolog-linear}
	\SetKwProg{Fn}{}{}{}
	\Fn{\FMain{quantified cnf-formula *$P$, cnf-clause * $C$}}{
		\KwIn{QBF-Prolog program $P = \Pi.\phi$ as a list of definite quantified Horn clauses, and a definite query clause $C$ as a list }
		\KwOut{output(P,C): yes, if Refutation return $\{$yes, inf$\}$, otherwise return the value of the Refutation function; that is, either loop or no. The return type is the enumeration {\bf status}  }
	\tcc{$C = x \leftarrow x_1,x_2,\dots,x_n$, let $i$ represent the maximum level of the literals of $C$ }
		$P' = \Pi_i.(\neg x \wedge x_1 \wedge\dots\wedge x_n \wedge \phi)$ \\
		\ForEach{existential variable $x \in var(P')$}{ $state(x) = new$}
			result = Refutation(x,loc-h-x)\\
\tcc{\textnormal{Call the Refutation function~[Algorithm~\ref{Algo-Refutation}], which computes $state(x)$ and store the output in the variable  result. Refutation takes the variable x, and a list of clauses from $P'$ with head $x$ as its parameters.}} 
			\uIf{\textnormal{result }$\in \{yes, inf\}$}{
				return \textit{{\bf yes}}
			}
			\Else{
			return ({\bf result})\\
	\tcc{result can be no or loop}	
			}

		}
	\textbf{End Function}
	\caption{Linear time algorithm for QBF-Prolog inspired from \cite{Bunning1999}.}
	\label{linear-ProloAlgo-for-QBFs}
\end{algorithm}

We only need to present the linear time function Refutation. Before presenting the speudocode for the same, we mention some preliminary remarks:

Observe that, for computing $state(x)$ for an existential variable $x$, we need to consider all definite clauses $C_x^i$ of the QBF-Prolog program $P$ with head $x$ in sequence. If $state(x,C_x^i) = no$ for all of them then we need to return a $no$. Otherwise, if we found a $loop$ or an $inf$ at any moment, then we must return immediately a $loop$ or an $inf$ respectively. Only when we encounter a yes for some definite Horn clause $C_x^i$, rest of the clause of $P'$ with head $x$ need to be considered.

During the course of execution, state of variables are going to change. At the begining, we initialize $state(x) = new$ for all existential variables $x$ (refer, Algorithm~\ref{linear-ProloAlgo-for-QBFs}). When we try to find a state of a variable $x$, we replace it's state from new to loop, as the first initialization. This will make sure that the case of a recursive call for determining the state of $x$ is handled correctly. When a first refutation for $\leftarrow x$ is found, we changed the state from loop to inf. This is the second initialization in our Refutation function. Finally, we check that the correct answer is a yes (finitely many refutations) or an inf (infinitely many refutations). 

To make task easy, Refutation function uses another function {\bf Testclause}. \\
{\bf TestClause}: the function takes a definite Horn clause from $P'$, and finds the state of the  positive literal of this clause based on the status of it's negative literals. In other words, it helps the Refutation function for computing $state(x)$ by computing $state(x, C_x^i)$, where $C_x^i$ is a clause from $P'$ with head $x$. TestClause uses Refutation function for the same.

We now present the Refutation function in Algorithm~\ref{Algo-Refutation}. 

\begin{algorithm}[H]
	\DontPrintSemicolon
	\SetKwFunction{FMain}  {status \bf Refutation}
	\SetKwProg{Fn}{}{}{}
	\Fn{\FMain{variable $x$, list of quantified CNF clauses $S$ with head $x$ from $P'$}}{
		\KwIn{variable $x$, and a list of quantified CNF clauses with head $x$  }
		\KwOut{$state(x)$: return type is enumeration status. }
	
			\uIf{$state(x) == new$}{
				\tcc{\textnormal{variable $x$ has not considered yet}}
				\uIf{is-empty$(S)$}{
					\tcc{\textnormal{No clauses with $x$ as it's head}}
					$state(x) =no$ \tcc{\textnormal{we got the answer}}
					
				}
				\Else{
					$state(x) = loop$; \\
					\tcc{\textnormal{First initialization. This handle the case of a recursive call}}
					stop = false;\\
					\While{(not stop) and (not is-empty$(S)$)}{
						result = TestClause(first($S$))\\
						\tcc{\textnormal{first($S$) gives first clause from $S$ with head }$x$}
						\uIf{result == no}{
							$S = S \setminus first(S)$
						}
						\Else{ stop = true}
						
					}
					\uIf{result $\neq$ yes}{
						state(x)=result 
						\tcc{\textnormal{we got the answer}}
					}
					\Else{
						\tcc{\textnormal{test remaining clauses for yes or inf}}
						infflag = false\\
						$S = S \setminus first(S)$; $state(x) = inf$\\
						\While{(not infflag) and (not is-empty$(S)$)}{	
							result = TestClause(first($S$))\\
							infflag = (result == loop) or (result == inf)\\
							$S = S \setminus first(S)$
						}
						\uIf{infflag}{$state(x) = inf$ \tcc{\textnormal{encountered only loop or inf}}}
						\Else{ $state(x) = yes$}					
					}
					
				}
				return $state(x)$
				
			}
			
		}
		\textbf{End Function}
		\caption{Linear time algorithm for QBF-Prolog inspired from \cite{Bunning1999}.}
		\label{Algo-Refutation}
	\end{algorithm}

Now we finish this Section by presenting the function TestClause:

\begin{algorithm}[H]
	\DontPrintSemicolon
	\SetKwFunction{FMain}  {status \bf TestClause}
	\SetKwProg{Fn}{}{}{}
	\Fn{\FMain{CNF clause *C  }}{
		\KwIn{Clause $C = x \leftarrow x_1,x_2,\dots,x_n$ with existential head }
		\KwOut{$state(head(C))$ with respect to the tail of $C$ }
		\tcc{\textnormal{literals of $C$ are arranged as per $\Pi_i$. Initialize as $n=0$ }}
		success = true; infflag = false; result = yes; i =1;\\
	\tcc{\textnormal{compute state of all universal literals of $C$}}
 $\forall i$, with $x_i$ universal, result-array[i]=com-block($x_i, C'$), where $C' = C \setminus x$\\
		\While{(success) and (not is-empty$(C)$)} {
			\uIf{sign(first($C$)) == positive} {
				\tcc{\textnormal{first($C$) returns first literal of $C$, i.e., ignore the head of $C$}}
				$C = C \setminus first(C)$\\

			}
			\Else{
				\tcc{\textnormal{consider next literal from the body of $C$}}
				\uIf{first($C$) == $\forall$}{
					\uIf{result-array[$i$] = yes}{
						result = no \tcc{\textnormal{ answer found, $x_i$ is blocked}}
						break \tcc{\textnormal{ break the while loop}}
					}
				
					 i++ \tcc{\textnormal{$x_i$ not blocked}}
					
				}
				\Else{
				result-array$[$i$]$ = Refutation(first($C$), loc-h(first($C$)))	\\
				result = result-array$[$i$]$; i++;\\
				success = (result == yes) or (result == inf)\\
				\tcc{\textnormal{when result ==loop, success = false and while loop ends}}
				\uIf{result == inf} { infflag = true
				}
				$C = C \setminus first(C)$
				}	
			}
			
		}
		\Switch{result}{
			\Case{result $\in \{yes, inf\}$}{
				\uIf{infflag}{
					return inf
				}
				\Else{ return yes}
			}
			\Case{result == loop}{
				return loop
			}
			\Case{return == no}{
				\uIf{infflag}{
					return loop
				}
				\Else{ return no}
			}

		}

	}
	\caption{TestClause Function computing $state(x, C_x)$ inspired from \cite{Bunning1999}.}
	\label{Algo-TestClause}
\end{algorithm}

\noindent
{\bf Time Complexity Analysis}:\\
The time complexity of Refutation is linear in the length of the QBF-Prolog program $P'$, as every clause of $P'$ is processed at most once. Only we need to be careful in the function TestClause, as we need to compute the state of universal variables as well. For this, we just maintained and array `result-array' and computed the required information at the very begining. Thus, this increases the complexity additively. This proves the following.
\begin{theorem}\label{thm:linear-time}
	For a QBF-Prolog program $P$ and a query clause $C$, with $var(C) \subseteq var(P)$, it is possible to compute the function $output(P,C)$ in linear time in the length of $P$ and $C$. 
\end{theorem}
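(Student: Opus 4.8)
The plan is to split the argument into a correctness part and a running-time part, both of which reduce to an analysis of the recursive function \textbf{Refutation} (Algorithm~\ref{Algo-Refutation}) together with its helper \textbf{TestClause} (Algorithm~\ref{Algo-TestClause}). First I would recall that $\text{output}(P,C)$ is determined by the behaviour of the depth-first search with backtracking performed by Algorithm~\ref{ProloAlgo-for-QBFs} on the program $P' = \Pi_i.(\neg x \wedge x_1 \wedge \dots \wedge x_n \wedge \phi)$ from Equation~\ref{PEquation}, and that by the $state(\cdot)$ characterisation developed in Section~\ref{sec:linear-algo-for-qbf-prolog} this behaviour is captured by $state(x)$: the search on top clause $\leftarrow x$ returns \textbf{yes} iff $state(x)\in\{yes,inf\}$, returns \textbf{loop} iff $state(x)=loop$, and returns \textbf{no} iff $state(x)=no$. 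Hence it suffices to prove (i) that \textbf{Refutation}$(x,\text{loc-h}(x))$ returns $state(x)$ as defined by the recursive clauses for $state(x)$, $state(x,C)$ and $blocked(u,C)$, and (ii) that the whole computation runs in time $O(|P|+|C|)$; the wrapper Algorithm~\ref{linear-ProloAlgo-for-QBFs} then reads off $\text{output}(P,C)$ in constant additional time.

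For correctness I would argue that \textbf{Refutation} is exactly a depth-first traversal of the dependency relation ``variable $y$ occurs in the body of a clause whose head is $x$'', in which the two re-initialisations act as a cycle detector: setting $state(x)=loop$ \emph{before} recursing on the bodies of its clauses guarantees that any recursive call returning to a variable still on the search stack reports $loop$, matching the definition of $state(x)=loop$ via an infinite branch of $T_P(\leftarrow x)$; and upgrading $state(x)$ to $inf$ after the first successful clause is found records the existence of at least one refutation while still tracking whether the remaining clauses force an infinite branch. I would make this precise by strong induction on the pair (number of not-yet-finalised variables, remaining length of the clause list being scanned), checking for each branch of the \texttt{switch}/\texttt{if} cascade in \textbf{TestClause} and each branch of \textbf{Refutation} that the returned value agrees with the corresponding case of the $state(x,C)$ / $state(x)$ definition. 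The universal-literal cases are handled by observing that \texttt{com-block}$(x_i,C')$ returns $blocked(x_i,C)$ correctly from the already-computed states of the existential literals of $C$ lying to the right of $x_i$, which is legitimate because those states are finalised before $blocked$ is consulted in the body scan. (That the $state$ assignment itself is well defined, i.e.\ that the mutual recursion in its definition has a unique consistent solution, is inherited from~\cite{Bunning1999} for the propositional fragment and extends verbatim once the $\forall$-Red/$blocked$ clauses are added.)

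The running-time bound rests on a single amortisation invariant: \emph{each clause of $P'$ has \textbf{TestClause} invoked on it at most once during the entire execution}. This holds because \textbf{Refutation}$(x,\cdot)$ does real work only when $state(x)=new$, and its first action sets $state(x)\neq new$, so every later call on the same $x$ returns in $O(1)$. Within the unique productive call for $x$, the clauses with head $x$ are scanned by the two \texttt{while} loops, and each examined clause is immediately deleted from the working list $S$ ($S = S\setminus first(S)$), so no clause with head $x$ is tested twice. A single \textbf{TestClause}$(C)$ call scans the literal list of $C$ once, doing $O(1)$ work per literal apart from the recursive \textbf{Refutation} calls it spawns (which are charged to their own clauses), plus the one-time filling of \texttt{result-array} via \texttt{com-block}, itself linear in $|C|$. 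Summing over all clauses gives $O\!\bigl(\sum_{C\in P'}|C|\bigr)=O(|P'|)$ total work, and $|P'| = O(|P|+|C|)$ since $P'$ adds to $\phi$ only the $\le n+1$ unit clauses coming from the literals of the query clause $C$ while the abstracted prefix $\Pi_i$ is obtained from $\Pi$ in one linear pass; hence the total is $O(|P|+|C|)$, as claimed.

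The step I expect to be the real obstacle is the amortisation/termination argument: verifying that the recursion of \textbf{Refutation} through \textbf{TestClause} and back is well-founded and that the ``at most one \textbf{TestClause} call per clause'' invariant survives both re-initialisations of $state(x)$ — in particular that the second \texttt{while} loop (run after a \textbf{yes} has been found, when $state(x)=inf$) cannot cause a clause with head $x$ to be re-tested through a cyclic chain of recursive calls. Pinning down the precise potential function and checking that every branch strictly decreases it is the delicate part; the correctness case analysis is long but essentially mechanical.
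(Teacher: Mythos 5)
Your proposal is correct and follows essentially the same route as the paper: the paper's own argument is exactly the amortisation claim that every clause of $P'$ is processed (has \textbf{TestClause} invoked on it) at most once, because \textbf{Refutation} does real work only while $state(x)=new$, plus the observation that precomputing the universal-literal states into \texttt{result-array} adds only a linear additive term. Your write-up is considerably more detailed — in particular the explicit correctness induction against the $state(x,C)$/$state(x)$ definitions and the check that the second \texttt{while} loop cannot re-test a clause — but these are elaborations of, not departures from, the paper's argument.
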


\section{Theory for QBF-Prolog with no restrictions on the query clause}\label{sec:qbf-prolog-general}
In this Section, we develop theory for the QBF-Prolog without any restrictions on the query Horn clause. That is, the query clause may contains new variables as well. To be precise, given a QBF-Prolog program $P = \Pi.\phi$ and a query Horn clause $C$, is the following true: $$P = \Pi.\phi \models \Pi'.(\phi \wedge C)?$$
where, $C$ may contains new variables which do not occur in $P$. Here, $\Pi'$ is obtained by extending $\Pi$ by the new variables of $C$. There is no restriction on how these new variables are placed in the prefix $\Pi$. 

We show in this section, that this case is not interesting for the QBFs. That is, if the query clause $C$ contains a new existential variable, then $C$ will be implied by $P$ for sure, and if $C$ has a new universal variable $u$, one can drop $u$ and still preserves the answer. 

We show this using the property \QIORplus from \cite{QRATPlus18}. We show that if the query clause contains a new existential variable, then $C$ will satisfy the \QIORplus property for sure and hence $C$ will be redundant for the QBF-Prolog program $P$.  We need the following definitions:

\begin{definition}[Outer clause\cite{QRATOriginal17}]
The outer clause of a clause $C$ on literal $\ell \in C$ with respect to the prefix $\Pi$ is the clause $OC(\Pi, C, \ell) = \{k~|~k \in C, k \leq_\Pi \ell, k \neq \ell\}.$
\end{definition}
Clearly, $OC(\Pi, C, \ell)$ of clause $C$ on literal $\ell \in C$ contains all literals of $C$, exluding $\ell$, which are smaller than or equal to $\ell$ in the variable ordering of prefix $\Pi$. 

\begin{definition}[Outer resolvent\cite{QRATOriginal17}]
Let $C$ be a clause with $\ell \in C$ and $D$ a clause occurring in QBF $\Pi.\phi$ with $ \neg \ell \in D$. The outer resolvent of $C$ with $D$ on literal $\ell$ with respect to the quantifier prefix $\Pi$, denoted $OR(\Pi,C,D,\ell)$ is the following:
$$OR(\Pi,C,D,\ell) = C \setminus \{\ell\} \cup OC(\Pi,D, \neg \ell)$$
\end{definition}

The following Definition is the extension of the property Quantified Implied Outer Resolvent (\QIOR) from~\cite{QRATOriginal17}.
\begin{definition}[\QIORplus\cite{QRATPlus18}]
A clause $C$ has property \QIORplus with respect to QBF $\Pi.\phi$ on literal $\ell \in C$ iff
$$\Pi.\phi \equiv \Pi.(\phi \wedge OR(\Pi,C,D,\ell))$$
for each clause $D \in \phi$ with $\neg \ell \in D$.
\end{definition}
Recall, that by, $\Pi.\phi \equiv \Pi.(\phi \wedge OR(\Pi,C,D,\ell)$, we mean that both QBFs are logical equivalent, that is, every model of one QBF is also the model of another. It has been proved in~\cite{QRATPlus18}, that a clause with \QIORplus property on some existential literal $\ell$ with respect to a QBF $\mathcal{F} = \Pi.\phi$ is redundant for the QBF $\mathcal{F}$. That is, we may add or remove the clause $C$ from $\mathcal{F}$ without effecting the satisfiability of $\mathcal{F}$. We below state the corresponding Theorem from~\cite{QRATPlus18}.

\begin{theorem}[\cite{QRATPlus18}\label{thm:qiorplus}]
Given a QBF $\mathcal{F} = \Pi.\phi$ and a clause $C$ with \QIORplus on an existential literal $\ell \in C$ with respect to QBF $\mathcal{F}' = \Pi'.\phi'$, where $\phi' = \phi \setminus \{C\}$ and $\Pi'$ is same as $\Pi$ with variables and respective quantifiers removed that no longer appear in $\phi'$. Then $\mathcal{F} \equiv_{\text{sat}} \mathcal{F}'$.
\end{theorem}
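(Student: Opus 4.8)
\medskip
\noindent\emph{Proof plan.} This is a clause-redundancy statement of the \QRAT family, so the plan is a model-transfer argument. Recall that $\phi' = \phi \setminus \{C\}$ and that $\Pi'$ is $\Pi$ with the variables that disappear from the matrix deleted. One direction is immediate: if $T \models \Pi.\phi$ then the restriction of $T$ to $var(\phi')$ is a pre-model of $\Pi'.\phi'$ that still satisfies every clause of $\phi'$ along every branch, and dropping unused variables from the prefix does not affect this, so $\mathcal{F}'$ is satisfiable whenever $\mathcal{F}$ is. The direction needed for $\mathcal{F} \equiv_{\text{sat}} \mathcal{F}'$ is the converse --- $\mathcal{F}$ satisfiable whenever $\mathcal{F}'$ is --- and the plan is to fix a model (equivalently, a winning existential strategy) $T'$ of $\Pi'.\phi'$ and build from it a model of $\Pi.\phi = \Pi.(\phi' \wedge C)$.

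First I would put $T'$ over the correct variable set: the variables occurring in $\Pi$ but not in $\Pi'$ are exactly those of $C$ absent from $\phi'$, so I extend $T'$ by assigning each fresh existential variable an arbitrary value and branching on both values of each fresh universal variable, obtaining a pre-model $\widetilde T$ of $\Pi.\phi'$ that satisfies $\phi'$ on every branch but may falsify $C$. The core step is then a surgery at the existential variable $v = var(\ell)$ of the literal $\ell$ on which $C$ has \QIORplus: redefine the value played at $v$ so that, on any branch along which all literals of $C$ outer to $\ell$ are false, the new value satisfies $\ell$, and otherwise the old value is kept. The reason this is a legal modification is that ``the outer literals of $C$ are all false'' depends only on variables decided strictly before $v$, so the new move at $v$ is still a function of the preceding play; this is exactly why \QIORplus is phrased with the \emph{outer} resolvent $OR(\Pi',C,D,\ell) = (C \setminus \{\ell\}) \cup OC(\Pi',D,\neg\ell)$ rather than a full resolvent.

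To finish I would verify, by a case analysis along an arbitrary branch $\tau$ of the modified tree $T$, that $T \models \Pi.(\phi' \wedge C)$. If $\tau$ keeps the old value of $v$, then $\tau$ agrees with a branch of $\widetilde T$, so $\phi'$ holds on it, and $C$ holds because some outer literal of $C$ is true on $\tau$. If $\tau$ flips $v$, then $\ell$ is true on $\tau$ so $C$ holds; clauses of $\phi'$ in which $v$ does not occur, and clauses containing the literal $\ell$, are still satisfied; the only candidates for breakage are clauses $D \in \phi'$ with $\neg\ell \in D$. For these I invoke \QIORplus: from $\Pi'.\phi' \equiv \Pi'.(\phi' \wedge OR(\Pi',C,D,\ell))$ it follows that $T'$, hence $\widetilde T$, hence $\tau$, satisfies $OR(\Pi',C,D,\ell)$ on every branch, so either some literal of $OC(\Pi',D,\neg\ell)$ is true on $\tau$ --- and, being outer to $\neg\ell$, it stays true after flipping $v$, so $D$ survives --- or some literal of $C \setminus \{\ell\}$ is true on $\tau$. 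Since $\phi'$ has only finitely many clauses containing $\neg\ell$, the several instances of \QIORplus compose (each merely asserts that every model of $\Pi'.\phi'$ already satisfies the corresponding outer resolvent), so all such $D$ are handled at once.

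The step I expect to be the main obstacle is the last alternative above: a literal of $C \setminus \{\ell\}$ true on a flipped branch must be \emph{inner} to $\ell$, and it certifies that $C$ was in fact already satisfied on the matching branch of $\widetilde T$, so in hindsight we should not have flipped --- yet at $v$'s turn this cannot be seen. Reconciling this is the technical heart of the \QIORplus redundancy argument: the fix needs a more careful prescription of when, and how far down the subtree, to reorganise the strategy, exploiting the outer structure of both $C$ and $D$, and here I would follow \cite{QRATPlus18} closely. A subsidiary but necessary piece of bookkeeping is the treatment of the fresh prefix variables introduced by $C$ and their occurrences inside $OR(\Pi',C,D,\ell)$. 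Once the surgery is shown to be strategy-definable and harmless for the $\neg\ell$-clauses, what remains is the routine \RAT-style flip verification sketched above, and the easy direction is as noted in the first paragraph.
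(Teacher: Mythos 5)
The paper itself does not prove this theorem: it is quoted from \cite{QRATPlus18} and used as a black box, so there is no in-paper argument to compare yours against. Judged on its own, your proposal sets up the right framework (the easy direction by clause deletion; the hard direction by extending a model $T'$ of $\Pi'.\phi'$ over the fresh variables of $C$ and performing a strategy surgery at $v=var(\ell)$ conditioned only on the literals of $C$ outer to $\ell$), but it stops exactly where the theorem has content. The decisive case is the one you flag yourself: on a branch $\tau$ where every literal of $C$ outer to $\ell$ is false, the new strategy sets $v$ so that $\ell$ holds, and for a clause $D\in\phi'$ with $\neg\ell\in D$ the property \QIORplus only yields that $\tau$ satisfies $(C\setminus\{\ell\})\cup OC(\Pi',D,\neg\ell)$. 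If the witnessing literal lies in $C\setminus\{\ell\}$ and is inner to $\ell$, nothing so far prevents $D$ from having been satisfied on $\tau$ solely through $\neg\ell$; the flipped branch then falsifies $D$, and the single-flip argument collapses. Writing that ``the fix needs a more careful prescription \dots and here I would follow \cite{QRATPlus18} closely'' is an accurate diagnosis, but it means the proof is not actually given: the reorganisation of the subtrees below the level of $\ell$ (equivalently, the induction on the prefix carried out in \cite{QRATPlus18}) that reconciles the flip decision made at $v$ with information only available at inner levels is the entire technical content of the theorem, and it is missing from your write-up.

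A secondary point: your remark that the several instances of \QIORplus ``compose'' because each merely asserts that every model of $\Pi'.\phi'$ satisfies the corresponding outer resolvent is fine as far as it goes, but it is precisely this per-branch, per-$D$ information that is too weak to close the case above, so the composition observation buys you nothing extra. As submitted, the proposal is a correct plan with the key lemma unproved; either carry out the surgery in full (specifying how the inner existential moves are redefined on flipped branches and verifying every $D\ni\neg\ell$ there), or present the step as an explicit citation to \cite{QRATPlus18} rather than as part of a proof.
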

Thus, Theorem~\ref{thm:qiorplus} states that the clause $C$ with \QIORplus property on some existential literal is redundant for the QBF, and can be added or removed. We also have universal elimination theorem from~\cite{QRATPlus18}.

\begin{theorem}[\cite{QRATPlus18}\label{thm:qiorplusuniversal}]
Given a QBF $\mathcal{F}_0 = \Pi.\phi$ and $\mathcal{F} = \Pi.(\phi \cup \{C\})$ where $C$ has \QIORplus on a universal literal $\ell \in C$ with respect to the QBF $\mathcal{F}_0$. Let $\mathcal{F}' = \Pi.(\phi \cup \{C'\})$, with $C' = C \setminus \{\ell\}$. Then $\mathcal{F} \equiv_{\text{sat}} \mathcal{F}'$.  
\end{theorem}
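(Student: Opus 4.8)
The plan is to establish the two implications that make up $\mathcal{F} \equiv_{\text{sat}} \mathcal{F}'$ separately. One of them is immediate: since $C' = C \setminus \{\ell\}$ is a subclause of $C$ we have $C' \models C$, so any assignment-tree model of $\mathcal{F}' = \Pi.(\phi \cup \{C'\})$ is also a model of $\mathcal{F} = \Pi.(\phi \cup \{C\})$; hence $\mathcal{F}'$ satisfiable implies $\mathcal{F}$ satisfiable. All the content is in the converse direction, namely that strengthening $C$ to $C'$ cannot turn a satisfiable formula unsatisfiable, and this is where the \QIORplus hypothesis is needed.

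For the converse I would start from a model $T$ of $\mathcal{F}$. Since $\phi \subseteq \phi \cup \{C\}$, $T$ is also a model of $\mathcal{F}_0 = \Pi.\phi$, so by the \QIORplus hypothesis together with the definition of $\equiv$ for QBFs, $T$ is a model of $\Pi.(\phi \wedge OR(\Pi,C,D,\ell))$ for every $D \in \phi$ with $\neg\ell \in D$; equivalently, every root-to-leaf path of $T$ satisfies each outer resolvent $OR(\Pi,C,D,\ell) = (C\setminus\{\ell\}) \cup OC(\Pi,D,\neg\ell)$. Let $u = var(\ell)$ and assume without loss of generality $\ell = u$. The idea is to perform surgery on $T$ at the universal node(s) labelled by $u$: on those branches where $\ell$ is the unique literal of $C$ that holds — precisely the branches along which the strengthened clause $C'$ could be falsified — I would splice in, below the $u$-node, the existential responses taken from the $u$-complementary part of $T$, so that $C'$ becomes satisfied there, while leaving all other branches of $T$ untouched. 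One then has to check that the resulting object $T'$ is still a legitimate pre-model (both children of every universal node present, exactly one child of every existential node) and that every branch of $T'$ satisfies $\phi \wedge C'$.

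The verification splits clause by clause. Satisfaction of $C'$ holds by construction of the splice. A clause $D \in \phi$ with $u \notin var(D)$, or with $u$ occurring positively in $D$, is either unaffected or trivially satisfied on the modified branch. The delicate case — and the main obstacle — is a clause $D \in \phi$ with $\neg\ell \in D$: on a modified branch $\neg u$ has become false, so $D$ can fail unless it was already satisfied through some other literal, and this is exactly what the outer-resolvent hypothesis is designed to secure. Here one uses that the literals of $OC(\Pi,D,\neg\ell)$ all lie at level $\le level(\Pi,\ell)$, hence at tree positions strictly above the $u$-node where the surgery is performed, so their truth values are undisturbed by the splice; combined with the fact that the borrowed branch of $T$ satisfies $OR(\Pi,C,D,\ell)$, this should force $D$ to remain satisfied. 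I expect the real difficulty to be that a naive ``clamp $u$ to a fixed value everywhere'' transplant does not by itself suffice, because along the borrowed branch $OR(\Pi,C,D,\ell)$ may happen to be satisfied only through a literal of $C\setminus\{\ell\}$, which says nothing about $D$; one therefore has to choose, for each assignment of the universals lying after $u$, which side of the $u$-node to borrow from, making the choice sufficiently globally — and it is precisely the simultaneous implication of all the outer resolvents $OR(\Pi,C,D,\ell)$ that makes such a consistent choice possible. Once $T'$ is shown to be a model of $\mathcal{F}'$, the converse implication, and hence the theorem, follows.

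Finally, it is worth noting that in the use made of this theorem in the present paper — where $\ell$ is a new universal variable of the query clause that does not occur in $\phi$ — there is no clause $D \in \phi$ with $\neg\ell \in D$ at all, so the \QIORplus premise is vacuous, the surgery above degenerates to simply deleting $\ell$ from $C$, and the argument becomes elementary; the full machinery sketched here is needed only for the general statement.
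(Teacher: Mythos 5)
First, a remark on the comparison itself: the paper does not actually prove this theorem --- it is stated with a citation to the \QIORplus paper and used as a black box --- so there is no in-paper proof to measure your argument against, and I can only assess it on its own terms.

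Your architecture is right: the direction ``$\mathcal{F}'$ satisfiable implies $\mathcal{F}$ satisfiable'' is immediate from $C' \subseteq C$; the converse is correctly reduced to a subtree surgery at the nodes of $u = var(\ell)$; and your closing observation --- that in this paper's only application of the theorem $\ell$ is a fresh universal variable, so the \QIORplus premise is vacuous and the statement degenerates to an elementary one --- is accurate and worth making. The genuine gap is exactly at the point you flag and then wave at. For a clause $D$ with $\neg\ell \in D$ on a spliced branch, you apply the hypothesis $T \models OR(\Pi,C,D,\ell)$ to the \emph{borrowed} branch $\sigma' = \pi\cdot\neg\ell\cdot\rho$, correctly observe that it may be satisfied only through an inner literal of $C\setminus\{\ell\}$ (which says nothing about $D$), and then appeal to an unspecified ``globally consistent choice of which side of the $u$-node to borrow from.'' No such choice is needed, and you have not shown that one exists. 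The resolution is to apply the outer-resolvent condition to the \emph{other} side: perform the splice at a $u$-node $n$ with prefix $\pi$ precisely when some original path $\sigma^*$ of $T$ through the $\ell$-child of $n$ falsifies $C'$. That path satisfies $OR(\Pi,C,D,\ell) = C' \cup OC(\Pi,D,\neg\ell)$ yet falsifies $C'$, hence it satisfies $OC(\Pi,D,\neg\ell)$; since every literal of $OC(\Pi,D,\neg\ell)$ lies at level at most $level(\Pi,\ell)$ and is not over $u$ (arrange $u$ last within its block), it is already made true by the shared prefix $\pi$, and therefore by \emph{every} path through $n$ --- in particular by every spliced path, which thus satisfies $D$. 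With this one observation the clause-by-clause check closes (unmarked $\ell$-children need no repair by definition of the marking, and whole-subtree replacement keeps $T'$ a legitimate pre-model); without it, your proof of the hard direction is incomplete.
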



{\bf Remarks}: Although, \QIORplus is extremely powerful in terms of redundancy detection, checking for the \QIORplus property in practice is too costly. Since, even for the propositional case checking whether an outer resolvent is implied by a propositional formula is co-NP hard. So, one needs a redundancy property that can be checked in polynomial time. One such redundancy property is the \QRAT~\cite{QRATOriginal17}. Following \QRAT, \QRATplus~\cite{QRATPlus18} have been introduced, which is a polynomial time redundancy property for a clause, based on QUP (quantified unit propogation). It has been proved in~\cite{QRATPlus18}, that \QRATplus is more powerful than \QRAT in terms of redundancy detection. 

However, we do not need \QRATplus to prove our results, \QIORplus will suffice. 

\begin{observation}
Let $P = \Pi.\phi$ be a QBF Prolog program. And, let $C$ be a definite Horn query clause. If there exists an existential literal say, $\ell \in C$ which is new. That is, $\ell \in var(C)$ but $\ell \not \in var(\phi)$. Then, we know that $C$ for sure has \QIORplus property with respect to the QBF Prolog program $P$: since there exists no clause $D$ in $\phi$ with $\neg \ell \in D$. Therefore, $C$ is  redundant by Thereom~\ref{thm:qiorplus}. That is, the answer is yes for our problem.
\end{observation}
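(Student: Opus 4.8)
The plan is to show that the hypothesis makes the defining condition of \QIORplus \emph{vacuously} true, and then to quote Theorem~\ref{thm:qiorplus}. First I would fix the new existential literal $\ell \in C$ and record the only fact we use: since $\ell$ is new, $var(\ell) \notin var(\phi)$, so in particular \emph{no} clause $D \in \phi$ contains the complementary literal $\neg \ell$. Consequently the universally quantified requirement in the definition of \QIORplus --- namely that $\Pi.\phi \equiv \Pi.(\phi \wedge OR(\Pi,C,D,\ell))$ hold for every clause $D \in \phi$ with $\neg \ell \in D$ --- ranges over the empty set of clauses $D$ and is therefore satisfied trivially. Hence $C$ has property \QIORplus on the existential literal $\ell$ with respect to the QBF-Prolog program $P = \Pi.\phi$.

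With this in hand I would apply Theorem~\ref{thm:qiorplus} in the form: take $\mathcal{F} = \Pi'.(\phi \wedge C)$ (the program with the query clause added, its prefix extended by the new variables of $C$) and $\mathcal{F}' = \Pi.\phi$, which is exactly $\mathcal{F}$ with $C$ removed together with the variables of $C$ that no longer occur in the matrix --- and these are precisely the new variables of $C$, among them $var(\ell)$, so the hypotheses of the theorem are met. The theorem then yields $\Pi'.(\phi \wedge C) \equiv_{\text{sat}} \Pi.\phi$, i.e. $C$ is redundant for $P$, which by the convention set at the start of Section~\ref{sec:qbf-prolog-general} is exactly the statement ``the answer is yes'' for the implication problem $\Pi.\phi \models \Pi'.(\phi \wedge C)$.

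If one wants the stronger conclusion $\Pi.\phi \equiv \Pi'.(\phi \wedge C)$ (full logical equivalence, hence the implication literally), I would add a short direct argument on assignment trees that exploits again that $\phi$ contains no occurrence of $var(\ell)$: given $T \models \Pi.\phi$, build $T'$ over $var(\Pi')$ by always selecting, at each node for $var(\ell)$, the branch on which the literal $\ell$ evaluates to true (note $\ell$ is existential, so we are free to do so) and keeping every other choice as in $T$; since $\phi$ does not mention $var(\ell)$ every leaf of $T'$ still satisfies $\phi$, and every leaf now satisfies $C$ because $\ell$ is satisfied on it, so $T' \models \Pi'.(\phi \wedge C)$; conversely one projects out the $var(\ell)$-level of any model of $\Pi'.(\phi \wedge C)$ --- legitimate for a new existential variable --- to recover a model of $\Pi.\phi$. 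There is essentially no obstacle here: the whole content is the observation that ``$\ell$ new'' empties the \QIORplus condition, and the only point requiring a moment's care is using the \emph{complementary} literal $\neg\ell$ (not merely $var(\ell)$) in that condition, so that the emptiness genuinely holds.
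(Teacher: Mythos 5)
Your proposal is correct and follows essentially the same route as the paper: the newness of the existential literal $\ell$ empties the set of clauses $D \in \phi$ with $\neg\ell \in D$, so the \QIORplus condition holds vacuously and Theorem~\ref{thm:qiorplus} yields redundancy. Your additional direct assignment-tree argument for full logical equivalence goes beyond what the paper records, but the core justification is identical.
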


\begin{observation}
Let $P = \Pi.\phi$ be a QBF Prolog program. And, let $C$ be a definite Horn query clause. If there exists a universal literal $\ell \in C$ which is a new literal, that is universal $\ell \in var(C)$ but $\ell \not \in var(\phi)$. Then by Thereom~\ref{thm:qiorplusuniversal}, we can drop universal literal $\ell$ from $C$ safely and then ask the problem for the remaining clause.
\end{observation}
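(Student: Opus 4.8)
The plan is to read the new universal literal $\ell$ as a vacuous instance of the hypothesis of Theorem~\ref{thm:qiorplusuniversal}. First I would fix the prefix: let $\Pi'$ be $\Pi$ extended by the new variables of $C$, placing the new universal variable $\ell$ in the innermost block (this is allowed, as nothing constrains the placement of the new variables). Since none of these new variables occurs in $\phi$, we have $\Pi'.\phi \equiv \Pi.\phi = P$, so it is harmless to take $\mathcal{F}_0 := \Pi'.\phi$ as the reference QBF. The defining condition of the \QIORplus property of $C$ on the universal literal $\ell$ with respect to $\mathcal{F}_0$ --- namely $\Pi'.\phi \equiv \Pi'.(\phi \wedge OR(\Pi',C,D,\ell))$ for every clause $D \in \phi$ with $\neg\ell \in D$ --- is then satisfied vacuously, because $\ell$ is new and hence no clause of $\phi$ contains $\neg\ell$. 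So $C$ has the \QIORplus property on $\ell$ with respect to $\mathcal{F}_0$, and Theorem~\ref{thm:qiorplusuniversal}, applied with $\mathcal{F} = \Pi'.(\phi \cup \{C\})$, $C' = C \setminus \{\ell\}$ and $\mathcal{F}' = \Pi'.(\phi \cup \{C'\})$, yields $\mathcal{F} \equiv_{\text{sat}} \mathcal{F}'$.

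It then remains to transfer this to the query problem. Here I would use that, because $\ell$ sits in the innermost block of $\Pi'$, it is not blocked in $C$, so passing from $C$ to $C' = C \setminus \{\ell\}$ is an ordinary $\forall$-Red step; under such a step the QBF is preserved not merely up to $\equiv_{\text{sat}}$ but up to genuine logical equivalence, $\Pi'.(\phi \wedge C) \equiv \Pi'.(\phi \wedge C')$ (every model of one restricts or extends to a model of the other, exactly as in the soundness argument for $\forall$-Red). Finally, since $C'$ no longer mentions $\ell$ and $\phi$ never did, the quantifier block of $\ell$ in $\Pi'$ is a dangling quantifier over a variable absent from the matrix and can be deleted; writing $\Pi''$ for the resulting prefix (so $\Pi'' = \Pi$ if $\ell$ was the only new variable of $C$), we get $\Pi'.(\phi \wedge C') \equiv \Pi''.(\phi \wedge C')$, hence $P \models \Pi'.(\phi \wedge C)$ holds if and only if $P \models \Pi''.(\phi \wedge C')$ holds. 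This is exactly the claim that $\ell$ may be dropped from the query clause without changing the answer.

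I expect the only non-routine point to be this last transfer: Theorem~\ref{thm:qiorplusuniversal} is stated in terms of satisfiability-equivalence, whereas the query problem is phrased in terms of logical implication, so one has to either strengthen the conclusion (as above, via the fact that a new universal literal can be placed innermost and the reduction is then a plain $\forall$-Red) or argue directly that, $\ell$ being a new universal variable, any model of $\Pi'.\phi$ can be assumed to treat $\ell$ symmetrically, so that $C = C' \vee \ell$ is a logical consequence of $\Pi'.\phi$ exactly when $C'$ is --- inspecting the branches with $\ell = 0$. Everything else --- introducing $\Pi'$, discharging the \QIORplus condition, and removing the dangling quantifier --- is routine bookkeeping.
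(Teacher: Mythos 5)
Your proposal is correct and takes essentially the same route as the paper: the paper likewise relies on the fact that, since $\ell$ is new, no clause $D\in\phi$ contains $\neg\ell$, so $C$ has the \QIORplus property on $\ell$ vacuously and Theorem~\ref{thm:qiorplusuniversal} licenses dropping it. Your extra discussion of the mismatch between the theorem's $\equiv_{\text{sat}}$ conclusion and the logical-implication form of the query problem addresses a point the paper passes over in silence, and your resolution (a new universal variable can be placed innermost and removed by an ordinary $\forall$-Red step, preserving genuine logical equivalence) is sound.
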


\noindent
{\bf QBF-Prolog for First-order Predicate Horn Logic}:
Recall the Definition of first-order predicate Horn logic from Section~\ref{sec:notations}. In this logic, we may replace each literal in a Horn clause by arbitrary predicate of any bounded arity. The QBF-Prolog defined in the previous section is capable of solving the first-order Horn logic. That is, given facts, rules and query as first-order predicate Horn formulas, the QBF-Prolog computes the function output(P,C) in linear time. Note that a predicate may contains arbirary quantified variables, which is not allowed in the existing Prolog. 
The algorithm for first-order predicate Horn logic is exactly as the Algorithm~\ref{linear-ProloAlgo-for-QBFs}. The only difference is that now the algorithm treats each predicate of the query Horn clause as a literal. For the resolution step of literals corresponding to the predicates, the algorithm may use unification to make them syntatically equal.

To be precise, let $C$ be a first-order query Horn clause. Every predicates $P(x_1,\cdots,x_k)$ appearing in clause $C$ is considered as a literal $\ell$. 
Also, the literal $\ell = P(x_1,\cdots,x_k)$ is considered as an existential literal (recall the definition of abstraction). 
We can resolve the literal $P(x_1,\cdots,x_k)$ with the same predicate $P(y_1,\cdots,y_k)$ in rule clause $\tilde{C}$ only if for each index $i$,
\begin{enumerate}
\item[(1)] the quantifier of $y_i$ is $\forall$, or,
\item[(2)] if quantifier of $y_i$ is $\exists$ then the quantifier of $x_i$ is $\exists$.
\end{enumerate} 
Recall that the variables occur in the same order as in the quantifier prefix. 
As the quantification of variables stays same in the resolvent, it can restricts the set of clause with which one can resolve a predicate $P$.

Note that for any literals $P(x)$ and $P(y)$ with $x\neq y$, separate states will be maintained by the algorithms.

\section{Applications of QBF-Prolog}\label{sec:application-of-qbf-prolog}
In this Section, we present some applications of the QBF-Prolog. Observe that the existing Prolog is a subset of QBF-Prolog. Hence, QBF-Prolog can handle problems that can be handled by the existing Prolog. Now we present some examples which can be handled by QBF-Prolog but not by the exisiting Prolog.
\subsection{Application 1: QBFs} \label{subsec:app1}
Recall the limitations of Prolog: it is not possible to define rules having an existentially quantified head, and having body with arbitrary quantifications. 

Below example illustrates, that the QBF-Prolog has no such restrictions. Let us consider the following simple QBF-Prolog program~\cite{BeyersdorffCMS18-TOCL}:
\begin{align}
\notag \mathcal{P}_n = \Pi.\phi =&\exists e_0 \forall u_1 \exists e_1 \dots u_n \exists e_n.  \\
\notag \textrm{For~} i \in [n], D_i&: ~~~~ ({e}_{i-1} \vee \neg u_i \vee \neg e_i) \ \wedge \\
\notag D_{n+1}&: ~~~~ ({e}_n)
\end{align}

Clearly, in this QBF-Prolog program, the head ($e_{i-1}$) of the rules are existential, which is not possible in Prolog. Also, the body of the Horn clauses have both existentially (i.e., $e_i$'s) and universally (i.e., $u_i$'s) quantified variables. 

Consider a simple query clause $C = (e_0)$. 
Observe that $var(C) \subseteq var(\mathcal{P}_n)$. As, the variable $e_0$ belongs to the first level in the quantifier prefix $\Pi$, we have the abstraction $\Pi_1.\phi$ is same as $\Pi.\phi$. Since, we have simple \sldQ refutation of $\Pi_1.(\neg e_0 \wedge \phi)$ (Figure~\ref{fig-refutation}), the QBP-Prolog is capable to show that the following: 
$\Pi.\phi \models \Pi.(\phi \wedge C)$.

To be precise, in order to solve the above problem, we invoke the Algorithm~\ref{linear-ProloAlgo-for-QBFs}, with parameters $\mathcal{P}_n$ and $(e_0)$. The Algorithm inturn calls the Algorithm~\ref{Algo-Refutation}, with parameters $e_0$ (precisely, $\neg e_0$) and $(e_0 \vee \neg u_1 \vee \neg e_1)$. Algorithm~\ref{Algo-Refutation} inturn call the Algorithm~\ref{Algo-TestClause}, with the parameter $(e_0 \vee \neg u_1 \vee \neg e_1)$. The recursion will end when the TestCase($(e_n)$) function returns a yes, which evetually reaches to the first level of recursion and Algorithm~\ref{linear-ProloAlgo-for-QBFs} also returns a yes.
\begin{figure}[h!] 
\centering{
\begin{tikzpicture}[scale=.8, transform shape]
\node[ellipse,draw] (a) at (0, 0) {{\large$\hspace{1mm} \Box \hspace{1mm}$}};
\node[ellipse,draw] (b) at (0, 2) {{\large$\neg u_1 \vee \cdots \vee\neg u_n$}};
\node[ellipse,draw] (c) at (-3.5, 3.5) {{\large$\neg u_1 \vee \cdots \vee \neg u_n \vee \neg e_n$}};
\node[ellipse,draw,label=right:{\Large$D_{n+1}$}] (d) at (2, 3.5) {{\large$\hspace{1mm} e_n \hspace{1mm}$}};
\node[ellipse,draw] (e) at (-5.5, 5.5) {{\large$\neg u_1 \vee \neg u_2 \vee \neg u_3 \vee \neg e_3$}};
\node[ellipse,draw] (f) at (-7.8, 7) {{\large$\neg u_1 \vee \neg u_2 \vee \neg e_2$}};
\node[ellipse,draw,label=right:{\Large$D_3$}] (g) at (-1.8, 7) {{\large$ e_2 \vee \neg u_3 \vee \neg e_3$}};
\node[ellipse,draw] (h) at (-9.8, 8.5) {{\large$\neg u_1 \vee \neg e_1$}};
\node[ellipse,draw,label=right:{\Large$D_2$}] (i) at (-5, 8.5) {{\large$ e_1 \vee\neg u_2 \vee\neg e_2 $}};
\node[ellipse,draw,label=left:{\Large$\neg C$}] (j) at (-11.5, 10) {{\large$\hspace{1mm} \neg e_0 \hspace{1mm}$}};
\node[ellipse,draw,label=right:{\Large$D_1$}] (k) at (-7.3, 10) {{\large$ e_0 \vee \neg u_1 \vee \neg e_1 $}};

\draw[black, ->] (b) -- (a) node[pos=.5,above, right] {$n~\forall$-Red steps};
\draw[black, ->] (c) -- (b);
\draw[black, ->] (d) -- (b);
\draw[black,dashed, ->] (e) -- (c);
\draw[black, ->] (f) -- (e);
\draw[black, ->] (g) -- (e);
\draw[black, ->] (h) -- (f);
\draw[black, ->] (i) -- (f);
\draw[black, ->] (j) -- (h);
\draw[black, ->] (k) -- (h);

\end{tikzpicture}
  }
  \caption{An {\sldQ} refutation of $\Pi.(\neg e_0 \wedge \phi)$~\cite{BeyersdorffCMS18-TOCL} for Application~\ref{subsec:app1} }\label{fig-refutation}
\end{figure}
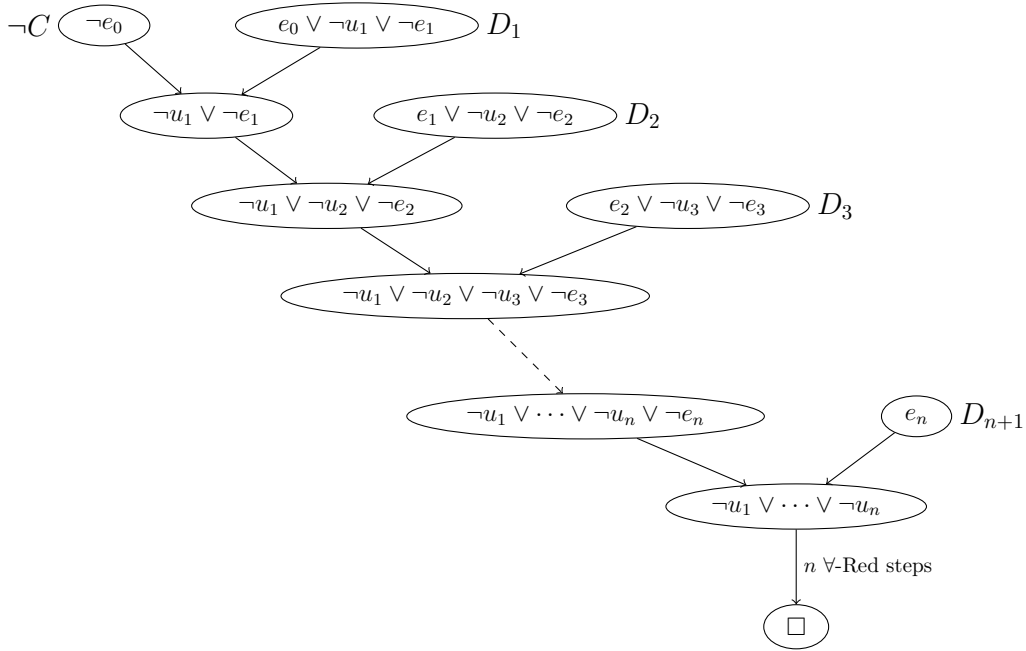

\subsection{Application 2: Tree-bipartite-problem}\label{subsec:tree-bipartite}
Consider the problem tree-bipartite from Example~\ref{example:nocycle-tree} of Section~\ref{subsec:contri}. The problem has the following rules: undirected graphs without cycles are bipartite, and
trees are acyclic. Such rules can be easily encoded in the Existing Prolog. Let us present an encoding in detail. The encoding uses the following predicates with the following interpretations:
\begin{align*}
\notag & Nocycle(g)~~/*\text{the predicate is 1 iff } g \text{ has no cycles}*/\\
\notag & Bipartite(g)~~/*\text{ the predicate is 1 iff } g \text{ is bipartite}*/\\
\notag & Tree(g)~~/*\text{ the predicate is 1 iff } g \text{ is a tree}*/
\end{align*}
Now, using these predicates, we have the following QBF-Prolog program for the tree-bipartite problem.
\begin{align*}
\notag & \forall g. Nocycle(g) \rightarrow Bipartite(g)~~/*\text{ rule 1}*/\\
\notag & \forall g. Tree(g) \rightarrow Nocycle(g)~~/*\text{ rule 2}*/
\end{align*}
Consider the following query which cannot be handled by the existing Prolog as they required universally quantified variables: are all trees bipartite? Below is the query clause for the same.
\begin{align*}
\notag & \forall g. Tree(g) \rightarrow Bipartite(g)~~/*\text{ query}*/\\
\end{align*}
Our algorithm solves the problem as follows: It first adds the negation of the query clause in the QBF-Prolog program. That is we have,
\begin{align*}
\notag & P' = \Pi_i. \neg Bipartite(g) \wedge Tree(g) \wedge \phi
\end{align*}
Certainly, the predicates corresponding to the query clause becomes existential. The algorithm picks the $\neg Bipartite(g)$ as the top clause and derives the empty clause using rule 1, and rule 2, as follows:
 \begin{prooftree}
 \AxiomC{$\neg Bipartite(g)$}
 \AxiomC{$\neg Nocycle(g) \vee Bipartite(g)$}
 \BinaryInfC{$\neg Nocycle(g)$}
 \AxiomC{$\neg Tree(g) \vee Nocycle(g)$}
 \BinaryInfC{$ \neg Tree(g)$}
 \AxiomC{$Tree(g)$}
 \BinaryInfC{$ \Box$}

 \end{prooftree}

\subsection{Application $3$: Bipartite problem}\label{subsec:app2}
Consider the problem related to the simple bipartite graphs $G=(V,E)$ with vertex partitions $U$ and $W$, from Section~\ref{subsec:contri} (refer~Example~\ref{example-bipartite}). Assume that $G$ has no loops and multiple edges. Consider a QBF-Prolog program which specifies the class of bipartite graphs. This can be encoded in first-order predicate Horn logic with only universally quantified variables. Let us present below one of the encodings in detail. The encoding uses the following predicates with the following interpretations:
\begin{align}
\notag & E(x,y)~~/*\text{the predicate is 1 iff there exists and edge between vertices }x \text{ and } y*/\\
\notag & first(x)~~/*\text{ the predicate is 1 iff } x \text{ belongs to the first part } U \text{ of the bipartite graph}*/\\
\notag & second(x)~~/*\text{ the predicate is 1 iff } x \text{ belongs to the second part } W \text{ of the bipartite graph}*/
\end{align}
Now, using these predicates, we have the following QBF-Prolog program for the bipartite graphs. 
\begin{align*}
\notag & \forall x, y. [first(x) \wedge E(x,y) ] \rightarrow second(y)~~/*\text{ rule 1}*/\\
\notag & \forall x, y. [E(x,y) \wedge second(x)] \rightarrow first(y)~~/*\text{ rule 2}*/\\
\notag & \forall x. first(x) \rightarrow \neg second(x)~~/*\text{ rule 3}*/ \\
\notag & \forall x. E(x,y) \rightarrow E(y,x)~~/*\text{ rule 4}*/ 
\end{align*}
We quickly explain the above rules: Rule 1 says that if an edge $\{x,y\}$ is present in the bipartite graph $G$ and $x \in U$ then we have $y\in W$. Similarly, rule 2 can be explained. Rule 3 says that if a vertex belong to $U$ then it does not belong to $W$. Rule 4 says that the predicate $E(x,y)$ is symmetric in nature.

Now, consider the following query regarding a bipartite graph $G=(\{U,W\},E)$ which uses universal variables and hence cannot be supported by the existing Prolog: starting from any vertex $x \in U$, and after jumping two hops via edges in $G$, do we again reach the set $U$? Below is the quantified query clause $C$ for the same:
\begin{align}
\notag & \forall x, y, z. [first(x) \wedge E(x,y) \wedge E(y,z)] \rightarrow first(z)~~/*\text{ query Horn clause for QBF-Prolog}*/
\end{align}
Certainly, such queries are valid for the QBF-Prolog. The algorithm solves the problem as follows: It first adds the negation of the query clause $C$ in the QBF-Prolog program. That is, we have
$$P' = \Pi_i. \neg first(z) \wedge first(x) \wedge E(x,y) \wedge E(y,z)  \wedge \phi$$
Clearly, all the literal in $P'$ corresponding to the predicate of $C$ becomes existential. The algorithm picks $\neg first(z)$ and resolves it with rule 2 after substituting $y/z$ in the same. Substituting $y/z$ in rule 2 we get: $\neg E(x,z) \vee \neg second(x) \vee  first(z)$. That is, we have
 \begin{prooftree}
 	\AxiomC{$\neg first(z)$}
 	\AxiomC{$\neg E(x,z) \vee \neg second(x) \vee first(z)$}
 	\BinaryInfC{$\neg E(x,z/y) \vee \neg second(x)$}
 \AxiomC{$E(x,y)$}
 \BinaryInfC{$\neg second(x)$}

 \end{prooftree}
We derived $\neg second(x)$ at this moment. We proceed as follows: 
 \begin{prooftree}
 \AxiomC{$\neg second(x/y)$}
 \AxiomC{$\neg first(x) \vee \neg E(x,y) \vee second(y)~~\text{ /*rule 1*/}$}
 \BinaryInfC{$\neg first(x) \vee \neg E(x,y)$}
 \AxiomC{$first(x)$}
 \BinaryInfC{$ \neg E(x,y)$}
 \AxiomC{$E(x,y)$}
 \BinaryInfC{$ \Box$}

 \end{prooftree}
Thus, the algorithm returns true for this query.

\subsection{Application 4: Simple relations}\label{subsec:simple-relations}
Consider the problem of Example~\ref{example:simple-relations} of the Section~\ref{subsec:contri}. The problem uses the following predicates: 
\begin{align*}
\notag & P(h,k)~~\text{/* the predicate is 1 iff } k = 2h */\\
\notag & R(h,k)~~\text{/* the predicate is 1 iff } h<k */
\end{align*}
We may give several interpretations to these predicates. For example, if $k$ and $h$ represent graphs, then the predicate $R(k,h)$ is $1$ iff the graph $h$ is a subgraph of the graph $k$. The predicate $P(k,h)$ is $1$ iff the graph $k$ is a superset of the graph $h$. That is, the graph $k$ is constructed from $h$ by say adding a vertex.

Let us now consider the following facts and rules:
\begin{align*}
\notag & \forall h \exists k. P(h,k)~~\text{/* fact 1 */}\\
\notag & \forall h, k. P(h,k) \rightarrow R(h,k)~~\text{/* rule 1 */}\\
\notag & \forall h_1,h_2,h_3. [R(h_1,h_2) \wedge R(h_2,h_3)] \rightarrow R(h_1,h_3)~~\text{ /* rule 2 */}
\end{align*}
Now consider the following query:
\begin{align*}
\notag & \forall h \exists g,k. [ P(h,g) \wedge P(g,k)] \rightarrow R(h,k)~~\text{ /* query */}
\end{align*}
Clearly the query is correct according to our interpretation. Our algorithm solves the problem as follows: It first adds the negation of the query clause in the QBF-Prolog program. That is, we have,
\begin{align*}
\notag & P' = \Pi_i. \neg R(h,k) \wedge P(h,g) \wedge P(g,k) \wedge \phi
\end{align*}
Recall, all predicates of the query clause becomes existential. The algorithm picks $\neg R(h,k)$ as the top clause and derive the empty clause as follows:
 \begin{prooftree}
 \AxiomC{$\neg R(h,k)$}
 \AxiomC{$\neg P(h,k) \vee R(h,k)\text{/* rule 1 */}$}
 \BinaryInfC{$\neg P(h,k)$}
 \AxiomC{$P(h,g/k)\text{/* added clause from }P' */$}
 \BinaryInfC{$ \Box$}

 \end{prooftree}

\subsection{Application 5}\label{sec:tree-encoding}
Consider the Prolog program mentioned in Example~\ref{example-tree}. The problem cannot be handeled by existing Prolog, as it requires variables with both quantifiers. Also, the proof for the problem is based on induction, as a result, instead of solving the problem, the QBF-Prolog return a loop.  

Let us now encode the problem: given trees with unique root and such that every node has only one parent node. The program uses the following predicates:
\begin{align*}
\notag & E(x,y)~~/*\text{the predicate is 1 iff there exists and edge between vertices }x \text{ and } y*/\\
\notag & Notroot(x)~~/*\text{ the predicate is 1 iff } x \text{ is not the root}*/\\
\notag & Root(x)~~/*\text{ the predicate is 1 iff } x \text{ is the root}*/\\
\notag & P(x)~~/*\text{ the predicate is an indicator used for connectivity}*/
\end{align*}
Using the above predicates we now present a QBF-Prolog program specifying the trees.
\begin{align*}
\notag & \forall x, y. E(x,y) \rightarrow \neg E(y,x)~~/*\text{ rule 1: tree is directed}*/\\
\notag & \forall x \exists y. Notroot(x) \rightarrow E(y,x)~~/*\text{ rule 2: if x is not root then it has a parent y}*/\\
\notag & \forall x \exists y \forall z. E(y,x) \wedge Notroot(x) \wedge \neg(z=y) \rightarrow \neg E(z,x)~~/*\text{rule3:  every non root node has unique parent}*/\\
\notag & \forall x. Root(x) \rightarrow \neg Notroot(x)~~/*\text{ rule4: if x is a root then x is not a Nonroot}*/\\
\notag & \exists x \forall y Root(x) \wedge \neg(y=x) \rightarrow \neg Root(y)~~/*\text{ rule5: there is a unique root}*/\\
\notag & \exists x. Root(x) \rightarrow P(x)~~/*\text{rule6:  if x is a root then the indicator P(x) is true}*/\\
\notag & \forall x,y. P(x) \wedge E(y,x) \rightarrow P(y)~~/*\text{rule7: this rule is for connectivity}*/ 
\end{align*}
Now, consider the following query: 
\begin{align*}
\notag & \forall x. P(x)~~/*\text{ is tree connected?}*/
\end{align*}
Observe that one may easily prove this query via induction. However, QBF-Prolog will detect a loop since it unables to mimic the inductive proofs. To be precise, the algorithm adds the negation of the predicate of the query clause in the QBF-Prolog program. That is, we have
$$P' = \Pi_i. \neg P(x) \wedge \phi$$
So we have
 \begin{prooftree}
 \AxiomC{$\neg P(x)$}
 \AxiomC{$\neg P(x) \vee \neg E(x,y/x) \vee P(y/x)~~\text{ rule 7: substitute y by x}$}
 \BinaryInfC{$\neg P(x) \vee \neg E(x,y/x)$}
 \end{prooftree}
The algorithm again needs to refute $\leftarrow P(x)$ hence outputs a loop. 
\section{Conclusion}\label{sec:conclusion}
The paper overcomes one of the major logical limitations of Prolog of not allowing arbitrary quantified variables in the rules, facts and queries. The paper achieves this by extending the \sld proof systems to quantified Boolean Horn formuals, followed by proposing an efficient implementation for the following problem: given a quantified Boolean Horn formula $\mathcal{F} = \Pi.\phi$ and a clause $C$, is $\Pi.\phi \implies \Pi.(\phi \wedge C)?$. 
The paper shows that the implementation can also handles the first-order predicate Horn logic. 

We also saw that the proposed algorithms unables to solve problems with inductive proofs. The next step is to extend the algorithm for simple problems based on induction. For example, extending the algorithm to handle the tree problem of Application~\ref{sec:tree-encoding}.

The satisfiability problem for the first-order predicate formulas is known as the Satisfiability Modulo Theories (SMT) problem.  
Designing efficient SMT solvers for formulas over arbitrary quantified variables is challenging. 
Our implementation behaves as an SMT-solver for the first-order predicate Horn formulas, over arbitrary quantified variables. 


{\subsubsection*{Acknowledgements.}
We thank Olaf Beyersdorff, Meena Mahajan, and Leroy Chew for useful discussions on extending Prolog to QBFs in the Dagstuhl Seminar `SAT and Interactions' (20061). This work is partially supported by FONDECYT-CONICYT postdoctorate program project: 3190527 in the Department of Mathematics, Pontifica Universidad Catolica de Chile ($1$st author).
}

\bibliographystyle{plain}
\bibliography{myrefs}

\end{document}